\title{Stack Semantics of Type Theory}
\author{Thierry Coquand\\Göteborgs Universitet \and Bassel Mannaa\\IT-Universitetet i København \and Fabian Ruch\\Göteborgs Universitet}
\date{April 2017}
\begin{document}
\renewcommand{\onlyinsubfile}[1]{}
\renewcommand{\notinsubfile}[1]{#1}

\maketitle

\begin{abstract}
We give a model of dependent type theory with one univalent universe and propositional truncation
interpreting a type as a {\em stack}, generalizing the groupoid model of type theory.
As an application, we show that countable choice cannot be proved in dependent type theory with one univalent
universe and propositional truncation.
\end{abstract}

\section{Introduction}

\label{sec:intro}
\onlyinsubfile{
  \title{Stack Semantics of Type Theory: Introduction}
  \maketitle
}

 The axiom of univalence \cite{2014arXiv1402.5556V,hottbook} can be seen as an extension  to dependent type theory
of the two axioms of extensionality for simple type theory as formulated by Church \cite{Church40}.
This extension is important since, using
universe and dependent sums, we get a formal system in which we can represent arbitrary
structures  (which we can not do in simple type theory) with elegant formal properties.
The goal of this paper is to contribute to
the meta-theory of such systems by showing that {\em Markov's principle} and {\em countable
choice} are not provable in dependent type theory extended with one univalent universe
and propositional truncation. For simple type theory such independence results can be
obtained by using {\em sheaf semantics}, respectively over Cantor space (for Markov's principle)
and open unit interval $(0,1)$ (for countable choice). There are however problems
with extending sheaf semantics to universes \cite{hsuniverses,xesheafuu}. In order to
address these issues
we use a suitable formulation of {\em stack semantics}, which, roughly speaking, replaces
{\em sets} by {\em groupoids}. The notion of stack was introduced in algebraic geometry \cite{EGA.i,giraud}
precisely in order to solve the same problems that one encounters when trying to extend sheaf
semantics to type-theoretic universes.
The compatibility condition for gluing local data is now formulated
in terms of isomorphisms instead of strict equalities.
In this sense, our model can also be seen as an extension of the groupoid
model of type theory \cite{MR1686862}. One needs to formulate some
strict functoriality conditions on the stack gluing operation, which seem necessary
to be able to get a model of the required equations of dependent type theory.

 We see this work as a first step
towards the proof of independence of countable choice from type theory with
a hierarchy of univalent universes and propositional truncation, which we hope to obtain by an
extension of our model to an $\infty$-stack version of cubical type theory \cite{CCHM}.

 The paper is organized as follows. We first present a slight variation of the groupoid model that we
find convenient for expressing the stack semantics. We then
explain how to represent propositional truncation in this setting, and how it can be used
to formulate countable choice. We then notice that, even in a constructive meta-logic where countable
choice fails, the axiom of countable choice does hold in this groupoid model. The groupoid model can
be refined rather directly over a Kripke structure, and we present then our notion of stacks over
a general topological space together with a
proof that we get a model of dependent type theory with one univalent universe and propositional
truncation. Instantiating our model to the case of Cantor space
and open unit interval $(0,1)$ we obtain the results that Markov's principle and countable choice cannot
be proved in dependent type theory with one univalent universe and propositional truncation.

\onlyinsubfile{
  \bibliography{bibliography}
}

\section{Type theory}

\label{sec:tt}
\onlyinsubfile{
  \title{Type theory}
  \maketitle
}

As in \cite{BCH}, we will use a generalized algebraic presentation of type theory that is
name-free and has explicit substitutions.
For instance, if we write $A\rightarrow B$ for $\Pi A (B\p)$ then we have
$\Gamma\vdash \lambda \q:A\rightarrow A$ since $\Gamma.A\vdash \q:A\p$.
The advantage of using such a presentation is that it makes it easier to check
the correctness of the model: Building such a model is reduced to defining
operations such that certain equations hold.
The main rules are presented in
figures \ref{fig:tt}, \ref{fig:tt-univ}, \ref{fig:tt-path} and \ref{fig:tt-sigma-nat-bool}.
We omit equivalence, congruence and substitution rules.
The conversion rules assume appropriate typing premises.

\begin{figure}
\begin{mathpar}
\inferrule*{}{⊢ \emp}\and
\inferrule*{Γ ⊢ A}{⊢ Γ.A}\\
\inferrule*{⊢ Γ}{⊢ \id : Γ → Γ}\and
\inferrule*{⊢ τ : Θ → Δ \\ ⊢ σ : Δ → Γ}{στ : Θ → Γ}\and
\inferrule*{Γ ⊢ A \\ ⊢ σ : Δ → Γ}{Δ ⊢ Aσ}\and
\inferrule*{Γ ⊢ A}{Γ.A ⊢ \q : A\p}\and
\inferrule*{Γ ⊢ a : A \\ ⊢ σ : Δ → Γ}{Δ ⊢ aσ : Aσ}\\
A\id = A\and
A(στ) = (Aσ)τ\and
a\id = a\and
a(στ) = (aσ)τ\\
\inferrule*{Γ ⊢ A}{⊢ \p : Γ.A → Γ}\and
\inferrule*{Γ ⊢ A \\ ⊢ σ : Δ → Γ \\ Δ ⊢ a : Aσ}{⊢ (σ,a) : Δ → Γ.A}\\
\idσ = σ\and
σ\id = σ\and
σ(τυ) = (στ)υ\and
\p(σ,a) = σ\and
\q(σ,a) = a\and
(\pσ,\qσ) = σ\\
\inferrule*{Γ.A ⊢ B}{Γ ⊢ ΠAB}\and
\inferrule*{Γ.A ⊢ b : B}{Γ ⊢ \lam{b} : ΠAB}\and
\inferrule*{Γ ⊢ f : ΠAB \\ Γ ⊢ a : A}{Γ ⊢ \app{f}{a} : B\ir{a}}\\
\app{λb}{a} = b\ir{a}\and
λ\app{f\p}{\q} = f
\end{mathpar}
\caption{\label{fig:tt}Type theory}
\end{figure}

We write $\ir{a}$ for the substitution $(\id,a)$ and $\ir{a,b}$ for $(\ir{a},b)$.

\begin{figure}
\begin{mathpar}
\inferrule*{Γ ⊢ A\ \textsf{small} \\ Γ.A ⊢ B\ \textsf{small}}{Γ ⊢ ΠAB\ \textsf{small}}\and
\inferrule*{Γ.A ⊢ B\ \textsf{discrete}}{Γ ⊢ ΠAB\ \textsf{discrete}}\\
\inferrule*{}{\Gamma ⊢ \Univ}\and
\inferrule*{Γ ⊢ A\ \textsf{small\ discrete}}{Γ ⊢ \In{A} : \Univ}\and
\inferrule*{Γ ⊢ a : \Univ}{Γ ⊢ \El{a}\ \textsf{small\ discrete}}\\
\El{\In{A}} = A\and
\In{\El{a}} = a\\
\inferrule*{Γ ⊢ A\ \textsf{small}}{Γ ⊢ A}\and
\inferrule*{Γ ⊢ A\ \textsf{discrete}}{Γ ⊢ A}
\end{mathpar}
\caption{\label{fig:tt-univ}Universe in type theory}
\end{figure}

\begin{figure}
\begin{mathpar}
\inferrule*{Γ ⊢ A \\ Γ ⊢ a : A \\ Γ ⊢ b : A}{Γ ⊢ \Path{A}{a}{b}\ \textsf{discrete}}\and
\inferrule*{Γ ⊢ A\ \textsf{small} \\ Γ ⊢ a : A \\ Γ ⊢ b : A}{Γ ⊢ \Path{A}{a}{b}\ \textsf{small}}\and
\inferrule*{Γ ⊢ A \\ Γ ⊢ a : A}{Γ ⊢ \refl{a} : \Path{A}{a}{a}}\and
\inferrule*{Γ ⊢ A \\ Γ.A.A\p.\Path{A\p\p}{\q\p}{\q} ⊢ C \\ Γ.A ⊢ c : C\ir{\q,\refl{\q}} \\ Γ ⊢ a : A \\ Γ ⊢ b : A \\ Γ ⊢ p : \Path{A}{a}{b}}{Γ ⊢ \J{c}{a}{b}{p} : C\ir{a,b,p}}\\
\J{c}{a}{a}{(\refl{a})} = c\ir{a}
\end{mathpar}
\caption{\label{fig:tt-path}Equality in type theory}
\end{figure}


\begin{figure}
\begin{mathpar}
\inferrule*{Γ.A ⊢ B}{Γ ⊢ ΣAB}\and
\inferrule*{Γ ⊢ a : A \\ Γ ⊢ b : B\ir{a}}{Γ ⊢ \pair{a}{b} : ΣAB}\and
\inferrule*{Γ ⊢ p : ΣAB}{Γ ⊢ \fst{p} : A}\and
\inferrule*{Γ ⊢ p : ΣAB}{Γ ⊢ \snd{p} : B\ir{\fst{p}}}\\
\fst{\pair{a}{b}} = a\and
\snd{\pair{a}{b}} = b\and
\pair{\fst{p}}{\snd{p}} = p\\
\inferrule*{}{\Gamma ⊢ \Nat\ \textsf{small discrete}}\and
\inferrule*{}{\Gamma ⊢ \0 : \Nat}\and
\inferrule*{Γ ⊢ n : \Nat}{Γ ⊢ \suc n : \Nat}\and
\inferrule*{Γ.\Nat ⊢ C \\ Γ ⊢ c : C\ir{\0} \\ Γ.\Nat.C ⊢ d : C\ir{\suc \q}\p \\ Γ ⊢ n : \Nat}{Γ ⊢ \natrec{c\, d\,n} : C\ir{n}}\\
\natrec{c\, d \,\0} = c\and
\natrec{c\,d\,(\suc n)} = d\ir{n,\natrec{c\,d\,n}}\\
\inferrule*{}{\Gamma ⊢ \Bool\ \textsf{small discrete}}\and
\inferrule*{}{\Gamma ⊢ \0 : \Bool}\and
\inferrule*{}{\Gamma ⊢ \1 : \Bool}\and
\inferrule*{Γ.\Bool ⊢ C \\ Γ ⊢ c : C\ir{\0} \\ Γ ⊢ d : C\ir{\1} \\ Γ ⊢ b : \Bool}{Γ ⊢ \boolrec{c\,d\,b} : C\ir{b}}\\
\boolrec{c\,d\,\0} = c\and
\boolrec{c\,d\,\1} = d
\end{mathpar}
\caption{\label{fig:tt-sigma-nat-bool}Dependent sum, natural numbers and Booleans in type theory}
\end{figure}

\onlyinsubfile{
  \bibliography{bibliography}
}

\section{Groupoid model}

\label{sec:gpd}
\onlyinsubfile{
  \title{Groupoid model of type theory}
  \maketitle
}

 In this section, we review the {\em groupoid model} of \cite{MR1686862}, with a slightly
different presentation inspired from \cite{Fabian:Thesis:2015}.
We work in a set theory with a Grothendieck universe ${\cal U}$ (or a suitable constructive
version of it if we work in a constructive set theory such as CZF \cite{Crosilla200233}).

 A {\em groupoid} is given by
a set $\Gamma$ of objects and for each $\rho,\rho'\in \Gamma$ a
set $\Gamma(\rho,\rho')$ of paths/iso\-mor\-phisms along with a composition operation
$\alpha\cdot\alpha'$ in $\Gamma(\rho,\rho'')$ for $\alpha$ in $\Gamma(\rho,\rho')$
and $\alpha'$ in $\Gamma(\rho',\rho'')$ and a unit element $\id_{\rho}$ in $\Gamma(\rho,\rho)$ and
an inverse operation
$\inv{\alpha}$ in $\Gamma(\rho',\rho)$ satisfying the usual unit, inverse and associativity laws.
We may write $\alpha:\rho \pto \rho'$ for $\alpha$ in $\Gamma(\rho,\rho')$.

 A map $\sigma:\Delta\rightarrow\Gamma$ between two groupoids $\Delta$ and $\Gamma$ is given by a
set-theoretic map $\sigma\, \nu$ in $\Gamma$ for $\nu$ in $\Delta$ and a map
$\sigma\, \beta$ in $\Gamma(\sigma\, \nu,\sigma\, \nu')$ for $\beta$ in $\Delta(\nu,\nu')$ which commutes
with unit, inverse and composition.

 A family $A$ of groupoids indexed over a groupoid $\Gamma$, written $\Gamma\vdash A$,
is given by a family
of sets $A\rho$ for each $\rho$ in $\Gamma$ and sets $A\alpha (u,u')$ for each $\alpha$ in
$\Gamma(\rho,\rho')$ and $u\in A\rho$ and $u'\in A{\rho'}$. We may write $\omega:u \pto_{\alpha} u'$
for $\omega$ element of $A\alpha(u,u')$ and we may omit the subscript $\alpha$ if it is clear
from the context.
We also have unit $\id_u : u \pto_{\id_{\rho}} u$ and inverse $\inv{\omega}:u' \pto_{\inv{\alpha}} u$
and composition $\omega\cdot\omega':u \pto_{\alpha\cdot\alpha'} u''$ also satisfying
the unit, inverse and associativity laws.
We furthermore should have a {\em path lifting structure}, which is given by two operations
$u\alpha$ in $A\rho'$ and $u\uparrow\alpha:u \pto_{\alpha} u\alpha$ for $u$ in $A\rho$
and $\alpha:\rho \pto \rho'$ satisfying the laws
\begin{gather*}
u\id_{\rho} = u \qquad (u\alpha)\alpha' = u(\alpha\cdot\alpha') \qquad
u\uparrow \id_{\rho} = \id_u \qquad (u\uparrow\alpha)\cdot (u\alpha\uparrow\alpha') = u\uparrow (\alpha\cdot\alpha')
\end{gather*}
 We see that $u\uparrow\alpha$ ``lifts'' the path $\alpha:\rho \pto \rho'$ given an initial point
$u$ in $A\rho$.

 Each $A\rho$ has a canonical groupoid structure, defining $A\rho(u,u')$ to be $A\id_{\rho}(u,u')$.
If $\alpha:\rho \pto \rho'$ we can define a groupoid map $A\rho\rightarrow A\rho'$ using the lifting
operation. We thereby recover the groupoid model as defined in \cite{MR1686862}.

 If $\sigma:\Delta\rightarrow\Gamma$ and $\Gamma\vdash A$ we define $\Delta\vdash A\sigma$
by composition: $(A\sigma)\nu$ is $A(\sigma\,\nu)$ and $(A\sigma)\beta(v,v')$ is
$A(\sigma\,\beta)(v,v')$.

 A {\em section} $\Gamma\vdash a:A$ is given by a family of objects $a\rho$ in $A\rho$ together
with a family of paths $a\alpha:a\rho \pto_{\alpha} a\rho'$ satisfying the laws
$a\id_{\rho} = \id_{a\rho}$ and $a(\alpha\cdot\alpha') = a\alpha\cdot a\alpha'$.

 If $\Gamma\vdash A$, we define a new groupoid $\Gamma.A$: An object $(\rho,u)$ in $\Gamma.A$ is a pair
with $\rho$ in $\Gamma$ and $u$ in $A\rho$ and a path $(\alpha,\omega):(\rho,u) \pto (\rho',u')$
is a pair $\alpha:\rho \pto \rho'$ and $\omega:u \pto_{\alpha} u'$. We then
have $\p:\Gamma.A\rightarrow \Gamma$ defined by $\p (\rho,u) = \rho$ and
$\p (\alpha,\omega) = \alpha$ and the section $\Gamma.A\vdash\q : A\p$ defined
by $\q (\rho,u) = u$ and $\q (\alpha,\omega) = \omega$.

 We say that a family $\Gamma\vdash A$ is {\em small} if each set $A\rho$ and $A\alpha(u,u')$ is
in the given
Grothendieck universe ${\cal U}$. We say that this family is {\em discrete} if the lifting is
{\em uniquely} determined: Given $u$ in $A\rho$ and $\alpha:\rho \pto \rho'$ there is a unique
$u'$ in $A\rho'$ such that $A\alpha(u,u')$ is inhabited and this set is a singleton in this case.
This notion of discrete family can be characterized in terms of the
common definition of \emph{discrete groupoid}, which says that a
groupoid is discrete if the only paths are units.

\begin{lemma}
$Γ ⊢ A$ is discrete if and only if each groupoid $Aρ$, $ρ ∈ Γ$, is
discrete.

\begin{proof}
Assume $Γ ⊢ A$ to be a discrete family and let $ω ∈ A\id_ρ(u,u')$ be
an arbitrary path. We immediately have $u' = u$ and $ω = \id_u$ by
discreteness of $Γ ⊢ A$ and $\id_u ∈ A\id_ρ(u,u)$.

For each $ρ ∈ Γ$, assume $Aρ$ to be a discrete groupoid and let $ω' ∈
Aα(u,u')$, $ω'' ∈ Aα(u,u'')$ be two arbitrary paths over some $α ∈
Γ(ρ,ρ')$. Then, $ω''$ can be expressed as the composite of $ω'$ and
$\inv{ω'} · ω'' ∈ A\id_{ρ'}(u',u'')$. The discreteness of $Aρ'$
forces $\inv{ω'} · ω''$ to be a unit path so that $u'' = u'$ and $ω''
= ω'$.
\end{proof}
\end{lemma}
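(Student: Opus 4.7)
The plan is to prove the two directions of the biconditional separately. The forward direction should be almost immediate from the definition by specializing the lifting path to an identity; the backward direction is the substantive one, and it splits naturally into an existence claim, covered by the path lifting structure, and a uniqueness claim, which I reduce to fibrewise discreteness by multiplying by an inverse.

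For the forward direction, assume $\Gamma \vdash A$ is a discrete family and fix $\rho \in \Gamma$. To show $A\rho$ is a discrete groupoid I must show every $\omega \in A\id_\rho(u,u')$ is a unit. Instantiating the definition of family-discreteness with the path $\id_\rho$ and initial point $u$ yields a unique $u''$ such that $A\id_\rho(u,u'')$ is inhabited, and moreover the set is a singleton. The element $\id_u \in A\id_\rho(u,u)$ shows $u''=u$ works; since $\omega \in A\id_\rho(u,u')$ also witnesses inhabitation, uniqueness forces $u' = u$ and then the singleton property forces $\omega = \id_u$.

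For the backward direction, assume each $A\rho$ is a discrete groupoid. Given $u \in A\rho$ and $\alpha : \rho \pto \rho'$, existence of $u' \in A\rho'$ with $A\alpha(u,u')$ inhabited is provided by the path lifting structure: take $u' = u\alpha$, witnessed by $u\uparrow\alpha \in A\alpha(u,u\alpha)$. For uniqueness, suppose $\omega' : u \pto_\alpha u'$ and $\omega'' : u \pto_\alpha u''$. Then $\inv{\omega'} \cdot \omega''$ lies over $\inv{\alpha}\cdot\alpha = \id_{\rho'}$, so it is a path in the fibre groupoid $A\rho'$ from $u'$ to $u''$. Discreteness of $A\rho'$ forces this composite to be a unit, which gives $u'' = u'$ and, composing on the left with $\omega'$, $\omega'' = \omega'$.

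The only non-routine point, and hence the ``main obstacle'', is reconciling the two formulations of discreteness: the family-level one packages existence of lifts together with uniqueness of both the endpoint and the lift, while the fibre-level one is a bare condition on endomorphisms. The bridge is exactly that any two parallel lifts over $\alpha$ differ by a fibre path over $\id_{\rho'}$, obtained by composing with the inverse. All other steps are bookkeeping with the groupoid-family operations already postulated in the definition.
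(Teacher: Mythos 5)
Your proof is correct and follows essentially the same route as the paper's: the forward direction uses $\id_u$ as the canonical witness to force $u'=u$ and $\omega=\id_u$, and the backward direction reduces uniqueness of lifts to fibrewise discreteness via the composite $\inv{\omega'}\cdot\omega''$ over $\id_{\rho'}$. Your explicit treatment of the existence half of the lifting claim (via $u\alpha$ and $u\uparrow\alpha$) is a small point the paper leaves implicit, but it is the same argument.
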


 We define $\UU$ to be the following groupoid: An object $X$ in $\UU$ is exactly an element of the given
Grothendieck universe ${\cal U}$, and an element of $\UU(X,X')$ is a bijection between $X$ and $X'$.
We can then define the small and discrete family $\UU\vdash \El\!$ by taking $\El X$ to be the set $X$
and $u \pto_{\alpha} u'$ to be the subsingleton set $\{0\, |\, u' = \alpha u\}$, that is
$u \pto_{\alpha} u'$ is inhabited and is the singleton $\{0\}$ exactly when $u' = \alpha u$.

\begin{proposition}
The family $\UU\vdash \El\!$ is a {\em universal} small and discrete family: If $\Gamma\vdash A$ is
small and discrete, then there exists a {\em unique} map $|A|:\Gamma\rightarrow \UU$ such that
$\El |A| = A$ (with {\em strict} equality).
\end{proposition}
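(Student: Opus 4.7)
The plan is to define $|A|$ explicitly using the path-lifting structure of $A$, then verify the strict equation $\El|A| = A$ and uniqueness by direct unwinding. On objects, I take $|A|\rho := A\rho$, which lies in $\UU$ by smallness of $A$. On a path $\alpha : \rho \pto \rho'$, I take $|A|\alpha$ to be the function $A\rho \to A\rho'$ sending $u$ to $u\alpha$. The laws $u\id_\rho = u$ and $(u\alpha)\alpha' = u(\alpha\cdot\alpha')$ of the path lifting make $|A|$ functorial on the underlying set-theoretic map data; functoriality forces each $|A|\alpha$ to be a bijection with two-sided inverse $|A|\inv{\alpha}$, so $|A|\alpha \in \UU(A\rho,A\rho')$ and $|A|$ is indeed a groupoid map $\Gamma \to \UU$.

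For the equation $\El|A| = A$, on objects $(\El|A|)\rho = \El(A\rho) = A\rho$ by the definition of $\El$. For morphism sets, $(\El|A|)\alpha(u,u')$ is the subsingleton $\{0 \mid u' = |A|\alpha(u)\} = \{0 \mid u' = u\alpha\}$. Discreteness of $A$ says that $A\alpha(u,u')$ is inhabited precisely when $u' = u\alpha$ and is a singleton in that case, so the two sets agree under the normalization (built into the definition of $\El$) that an inhabited singleton hom-set of a discrete family is $\{0\}$.

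For uniqueness, suppose $f : \Gamma \to \UU$ satisfies $\El f = A$. Applying $\El$ on objects yields $f\rho = A\rho$. For a path $\alpha$, the function $f\alpha : A\rho \to A\rho'$ is determined by its graph, and $u' = f\alpha(u)$ holds iff $(\El f)\alpha(u,u')$ is inhabited iff $A\alpha(u,u')$ is inhabited iff $u' = u\alpha$ by the uniqueness clause of discreteness. Therefore $f\alpha = |A|\alpha$, and $f = |A|$.

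The argument is essentially bookkeeping: the discreteness hypothesis collapses the choice of path in $A$ into the single datum of its endpoint, and this endpoint datum is exactly what a map into $\UU$ records. The only mildly delicate point is the compatibility of singleton conventions between $A$ and $\El$, which is not a substantive obstacle but should be flagged as the one place where the strictness of the equality, as opposed to mere isomorphism, relies on a normalization implicit in the formulation of discrete families.
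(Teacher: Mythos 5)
Your proof is correct; the paper states this proposition without giving a proof, and your construction ($|A|\rho = A\rho$ on objects, $|A|\alpha\,u = u\alpha$ via the uniquely determined lifting) together with the direct unwinding of $\El$ and of discreteness for the strict equality and uniqueness is exactly the intended argument. You are also right to flag the singleton convention as the one delicate point: for $\El|A| = A$ to hold \emph{strictly} rather than up to isomorphism, the inhabited path sets $A\alpha(u,u\alpha)$ of a discrete family must be the fixed singleton $\{0\}$ used in the definition of $\El$, a normalization the paper leaves implicit.
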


 For $\Gamma\vdash A$ and $\Gamma.A\vdash B$ we define $\Gamma\vdash\Pi A B$ by
taking $(\Pi A B)\rho$ to be the set of functions $c\, u$ in $B(\rho,u)$ and
$c\, \omega$ in $B(\id_{\rho},\omega)(c\, u,c\, u')$ commuting with unit and composition,
and $(\Pi A B)\alpha(c,c')$ to be the set of functions $\gamma\, \omega:c\, u \pto_{(\alpha,\omega)} c'\, u'$
such that $(\gamma\,\omega_0)\cdot (c'\,\beta') = (c\,\beta)\cdot (\gamma\,\omega_1)$ if
$\beta:u_0 \pto_{{\rho}} u_1$ and
$\beta':u'_0 \pto_{{\rho'}} u'_1$ and
$\omega_0:u_0 \pto_{\alpha} u'_0$ and
$\omega_1:u_1 \pto_{\alpha} u'_1$. There is then \cite{MR1686862,Fabian:Thesis:2015} a canonical way to define
a composition operation (we need the path lifting structure for $\Gamma\vdash A$)
and path lifting structure for $\Gamma\vdash\Pi A B$.

\begin{proposition}
If $\Gamma.A\vdash B$ is discrete, then so is $\Gamma\vdash \Pi A B$.

\begin{proof}
In order to show that $Γ ⊢ ΠAB$ is a discrete family, it suffices to
show that $(ΠAB)ρ$ is a discrete groupoid for each $ρ ∈ Γ$. Assume
$Γ.A ⊢ B$ to be a discrete family and let $γ ∈ (ΠAB)\id_ρ(c,c')$ be
an arbitrary path. In particular, $Bu$ is a discrete groupoid forcing
$γ\,\id_{u} ∈ B\id_{u}(c\,u,c'\,u)$ to be a unit path for each $u ∈
Aρ$ so that $c'\,u = c\,u$ for all $u ∈ Aρ$. The discreteness of $Γ.A
⊢ B$ also forces $c'\,ω = c\,ω$ and $γ\,ω = \id_c\,ω$ in
$Bω(c\,u',c\,u'')$ for all $ω ∈ A\id_ρ(u',u'')$, which concludes $c'
= c$ and $γ = \id_c$.
\end{proof}
\end{proposition}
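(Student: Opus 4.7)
The plan is to reduce the statement to the fiber-level characterisation of discreteness supplied by the preceding lemma, and then directly exploit the definition of the morphism sets of $\Pi A B$. Since that lemma tells us $\Gamma \vdash \Pi A B$ is discrete iff each groupoid $(\Pi A B)\rho$ is discrete, it suffices, for each $\rho \in \Gamma$, to take an arbitrary path $\gamma \in (\Pi A B)\id_\rho(c,c')$ and show that $c' = c$ and $\gamma = \id_c$.

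First I would unfold what such a $\gamma$ is: a family of paths $\gamma\,\omega : c\,u \pto_{(\id_\rho,\omega)} c'\,u'$ for all $\omega : u \pto_\alpha u'$ over $\id_\rho$, subject to the naturality square with $c$ and $c'$. The key observation is the instance at $\omega = \id_u$: here $\gamma\,\id_u$ lives in $B\id_{(\rho,u)}(c\,u,c'\,u)$, i.e.\ in the hom-set of the fibre groupoid $B(\rho,u)$ at $\id_{(\rho,u)}$. Applying the lemma in the other direction, discreteness of $\Gamma.A \vdash B$ tells us each $B(\rho,u)$ is a discrete groupoid, so this forces $c'\,u = c\,u$ and $\gamma\,\id_u = \id_{c\,u}$ for every $u \in A\rho$.

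Next, for an arbitrary $\omega \in A\id_\rho(u',u'')$, both $c\,\omega$ and $\gamma\,\omega$ lie in $B(\id_\rho,\omega)(c\,u',c\,u'')$ (using the equality $c'\,u'' = c\,u''$ just established to rewrite the codomain of $\gamma\,\omega$), and similarly $c'\,\omega$ lands in the same set after rewriting the domain. The discreteness of $\Gamma.A \vdash B$ — now in its original uniqueness-of-lifting formulation — says that this hom-set is a subsingleton, so $\gamma\,\omega = c\,\omega = c'\,\omega$. Together these two steps give $c' = c$ and $\gamma = \id_c$, as required.

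The main obstacle is not really conceptual but notational: one has to stay careful about which fibre groupoid and which path in the base each $\gamma\,\omega$ is indexed over, and in particular to recognise that the fibre-wise discreteness of $B$ at $\id_{(\rho,u)}$ handles the unit case while the family-level discreteness of $B$ handles the naturality/non-unit case. Once the bookkeeping is right the argument is entirely formal, and no use is made of the path-lifting structure of $A$ beyond what is already packaged into the definition of the morphisms of $\Pi A B$.
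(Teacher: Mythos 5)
Your argument is correct and follows essentially the same route as the paper's proof: reduce via the preceding lemma to discreteness of each fibre groupoid $(\Pi A B)\rho$, use discreteness of $B(\rho,u)$ at the unit instance $\gamma\,\id_u$ to get $c'\,u = c\,u$, and then use the subsingleton property of $B(\id_\rho,\omega)(c\,u',c\,u'')$ to identify $\gamma\,\omega$, $c\,\omega$ and $c'\,\omega$. No substantive differences.
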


 If $\Gamma\vdash A$ and $\Gamma\vdash a_0:A$ and $\Gamma\vdash a_1:A$
we define the {\em discrete} family $\Gamma\vdash \Path{A}{a_0}{a_1}$.
We take $(\Path{A}{a_0}{a_1})\rho$ for $ρ ∈ Γ$ to be the set
$A\id_{\rho}(a_0\rho,a_1\rho)$ and $(\Path{A}{a_0}{a_1})\alpha(\omega,\omega')$ for
$\alpha:\rho \pto \rho'$ to be the subsingleton
$\{0\, |\, \omega\cdot a_1\alpha = a_0\alpha\cdot \omega'\}$.

 It is then possible \cite{MR1686862,Fabian:Thesis:2015}
to check that this defines a model of type theory as presented by the rules of
figures \ref{fig:tt}, \ref{fig:tt-univ}, \ref{fig:tt-path} and \ref{fig:tt-sigma-nat-bool}.

\subsection{Propositional truncation}

 We say that a groupoid is a {\em proposition} if and only if there exists
exactly one path between two objects. So $\Gamma$ is a proposition if and only if
each set $\Gamma(\rho,\rho')$ is a singleton.
More generally, we say that a family $Γ ⊢ A$ is a proposition if each
set $Aα(u,u')$ is a singleton.

\begin{lemma}
$Γ ⊢ A$ is a proposition if and only if each groupoid $Aρ$, $ρ ∈ Γ$,
is a proposition.

\begin{proof}
It is clear that each $Aρ$ is a proposition if the whole family $Γ ⊢
A$ is a proposition.

Assume now each $Aρ$ to be a proposition and let $α ∈ Γ(ρ,ρ')$ as
well as $u ∈ Aρ$, $u' ∈ Aρ'$. Then, the set $Aα(u,u')$ is inhabited
by the composite $(u↑α) · p_{uα,u'}$ of the lifting of $u$ over $α$
with the unique path between $uα$ and $u'$ in $Aρ'$. Furthermore, for
any two paths $ω, ω' ∈ Aα(u,u')$ the composite $\inv{ω} · ω'$ is
forced to be the unit path at $u'$ so that $ω' = ω · \inv{ω} · ω' =
ω$.
\end{proof}
\end{lemma}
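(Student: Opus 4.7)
The plan is to prove the two directions separately, with the forward implication being essentially immediate and the reverse relying crucially on the path lifting structure built into the definition of a family.

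For the forward direction, I would observe that specializing the hypothesis to $\alpha = \id_\rho$ shows that $A\id_\rho(u,u')$, which is by definition $A\rho(u,u')$, is a singleton. Hence each fiber groupoid $A\rho$ is a proposition.

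For the reverse direction, assume each $A\rho$ is a proposition and fix some $\alpha : \rho \pto \rho'$, $u \in A\rho$, $u' \in A\rho'$. I would establish inhabitation and uniqueness of $A\alpha(u,u')$ in turn. For inhabitation, the path lifting structure yields $u\uparrow\alpha : u \pto_\alpha u\alpha$, and since $A\rho'$ is a proposition there is a (unique) path $p$ from $u\alpha$ to $u'$ in $A\id_{\rho'}$; the composite $(u\uparrow\alpha)\cdot p$ then lives in $A(\alpha\cdot\id_{\rho'})(u,u') = A\alpha(u,u')$. For uniqueness, given $\omega,\omega' \in A\alpha(u,u')$, the composite $\inv{\omega}\cdot\omega'$ lies in $A\id_{\rho'}(u',u')$, which is a singleton containing $\id_{u'}$, and so must be $\id_{u'}$; multiplying by $\omega$ on the left then gives $\omega' = \omega$.

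The only nontrivial ingredient is the path lifting structure, without which one could not transport $u$ from the fiber over $\rho$ into the fiber over $\rho'$ to reduce to a statement about a single fiber; no genuine obstacle is expected beyond making sure the composites land in the correct hom-sets, which is a direct check using the laws for lifting.
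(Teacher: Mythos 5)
Your proof is correct and follows essentially the same route as the paper: the forward direction by specializing to identity paths, and the reverse direction by composing the lift $u\uparrow\alpha$ with the unique path $u\alpha \pto u'$ in the fiber $A\rho'$ for inhabitation, and using that $\inv{\omega}\cdot\omega'$ must be a unit for uniqueness.
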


 We define as usual (where names are used for readability)
\begin{gather*}
\isProp{A} = \Pi (x_0\, x_1:A) \Path{A}{x_0}{x_1}
\end{gather*}

\begin{proposition}
If $\Gamma\vdash A$, then there exists a section $\Gamma\vdash p:\isProp{A}$ if and only if
each groupoid $A\rho$, $\rho$ in $\Gamma$, is a proposition.

\begin{proof}
It is enough to show that there exists a family of paths $p_{ρ,u,u'}
∈ A\id_ρ(u,u')$, $u, u' ∈ Aρ$, $ρ ∈ Γ$, satisfying $p_{ρ,u,u'} · ω' =
ω · p_{ρ',v,v'}$ for all $ω ∈ Aα(u,v)$ and $ω' ∈ Aα(u',v')$, $u, u' ∈
Aρ$, $v, v' ∈ Aρ'$, $α ∈ Γ(ρ,ρ')$, $ρ, ρ' ∈ Γ$, if and only if each
groupoid $Aρ$, $ρ ∈ Γ$, is a proposition.

Assume such a family $p$ and let $ρ ∈ Γ$, $u, u' ∈ Aρ$, then
$A\id_ρ(u,u')$ is inhabited by the composite $p_{ρ,u,u'} ·
\inv{(p_{ρ,u',u'})}$ and, moreover, any other path $ω ∈ A\id_ρ(u,u')$
satisfies $p_{ρ,u,u'} · \id_{u'} = ω · p_{ρ,u',u'}$ so that
$A\id_ρ(u,u')$ is indeed a singleton.

In the opposite direction, we can actually assume the whole family $Γ
⊢ A$ to be a proposition. Then, defining $p_{ρ,u,u'}$ to be the
unique path from $u$ to $u'$ satisfies $p_{ρ,u,u'} · ω' = ω ·
p_{ρ',v,v'}$ because there exists exactly one path from $u$ to $v'$
over $α$.
\end{proof}
\end{proposition}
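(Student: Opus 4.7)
The plan is to unfold what a section of $\isProp{A}$ amounts to concretely, and then prove both directions in terms of this unfolded data. By the definitions of $\Pi$ and $\Path$ given earlier, a section $\Gamma \vdash p : \isProp{A}$ is exactly a family of paths $p_{\rho,u,u'} \in A\id_\rho(u,u')$ indexed by $\rho \in \Gamma$ and $u,u' \in A\rho$, subject to the single naturality condition
\[
p_{\rho,u,u'} \cdot \omega' = \omega \cdot p_{\rho',v,v'}
\]
for every $\alpha : \rho \pto \rho'$, $\omega \in A\alpha(u,v)$, and $\omega' \in A\alpha(u',v')$. The several naturality conditions for the two nested $\Pi$-types collapse to this one equation because $\Path{A\p\p}{\q\p}{\q}$ is a subsingleton family, so every other coherence it demands is automatic. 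This reduction is precisely the first sentence of the author's proof, and I would carry it out once explicitly before addressing either direction.

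For the reverse direction, assume each $A\rho$ is a proposition. By the preceding lemma, the whole family $\Gamma \vdash A$ is then a proposition, so $A\alpha(u,u')$ is a singleton for every $\alpha$ and every $u,u'$. Defining $p_{\rho,u,u'}$ to be the unique element of $A\id_\rho(u,u')$ yields the required family, since both sides of the naturality equation live in the singleton $A\alpha(u,v')$ and are therefore equal.

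For the forward direction, assume a family $p$ satisfying the naturality equation is given, and fix $\rho \in \Gamma$ together with $u,u' \in A\rho$. The element $p_{\rho,u,u'}$ already witnesses that $A\id_\rho(u,u')$ is inhabited. For uniqueness, let $\omega \in A\id_\rho(u,u')$ be arbitrary and instantiate the naturality equation with $\alpha = \id_\rho$, $v = u'$, $v' = u'$, the given $\omega$, and $\omega' = \id_{u'}$; the result is $p_{\rho,u,u'} = \omega \cdot p_{\rho,u',u'}$, forcing $\omega = p_{\rho,u,u'} \cdot \inv{p_{\rho,u',u'}}$ independently of $\omega$. Hence $A\rho(u,u') = A\id_\rho(u,u')$ is a singleton, so $A\rho$ is a proposition.

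The main obstacle is the initial unfolding step: one must carefully extract the single naturality equation from the layered naturality conditions imposed by the two nested $\Pi$-types on the $\Path$-family, and check that all the other conditions (functoriality in $\alpha$, in $\omega$, and in $\omega'$) are vacuous because the target is a subsingleton. Once this reduction is in place, both directions are short and use only the preceding lemma on propositional families together with the singleton structure of the path family.
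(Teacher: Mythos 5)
Your proposal is correct and follows essentially the same route as the paper's proof: the same reduction of a section of $\isProp{A}$ to a family $p_{\rho,u,u'}$ satisfying the single naturality equation, the same instantiation at $\alpha = \id_\rho$, $\omega' = \id_{u'}$ to force $\omega = p_{\rho,u,u'} \cdot \inv{(p_{\rho,u',u'})}$, and the same use of the preceding lemma to upgrade ``each $A\rho$ is a proposition'' to ``the whole family is a proposition'' in the converse. The only cosmetic difference is that you witness inhabitation of $A\id_\rho(u,u')$ by $p_{\rho,u,u'}$ itself rather than by the composite $p_{\rho,u,u'}\cdot\inv{(p_{\rho,u',u'})}$, which is equally valid.
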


 For $\Gamma\vdash A$ we define $\Gamma\vdash \norm{A}$ as follows. For each $\rho$ in $\Gamma$
 we take $\norm{A}\rho = A\rho$, and for each $\alpha$ in $\Gamma(\rho,\rho')$,
$u$ in $A\rho$ and $u'$ in $A\rho'$ we take $\norm{A}\alpha(u,u')$ to be a fixed singleton $\{0\}$.
We then have sections of $\Gamma\vdash \isProp{\norm{A}}$ and $\Gamma\vdash A\rightarrow\norm{A}$,
and given sections of $\Gamma\vdash \isProp{B}$ and $\Gamma\vdash A\rightarrow B$
there is a section of $\Gamma\vdash\norm{A}\rightarrow B$. In this way, we get a model
of the {\em propositional truncation} operation.

\subsection{Countable choice}

 The statement of {\em countable choice} can be formulated as the type \cite{hottbook}
\begin{gather*}
\CC = \Pi (A:\NN\rightarrow\UU) (\Pi (n:\NN) \norm{\El (A\, n)})\rightarrow \norm{\Pi (n:\NN) \El (A\, n)}
\end{gather*}

 Notice that we can develop the groupoid model in a constructive meta-theory where countable
choice may or may not hold.

\begin{theorem}
The statement $\CC$ is valid in the groupoid model (even if countable choice does not hold in the meta-theory).
\end{theorem}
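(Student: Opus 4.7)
The approach rests on the fact that propositional truncation in the groupoid model is \emph{transparent at the object level}: by construction one has $\norm{T}\rho = T\rho$ as a set, and only the path sets $\norm{T}\alpha(u,u')$ are collapsed to the singleton $\{0\}$. Consequently, to produce a section $\Gamma \vdash h : \norm{T}$ it suffices to supply an element $h\rho \in T\rho$ for each $\rho \in \Gamma$: the path data $h\alpha$ is then forced by the singleton target, and the unit and composition laws hold automatically.

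To inhabit $\CC$, I would $\lambda$-abstract twice, reducing the task to exhibiting a section of $\norm{\Pi(n:\NN)\El(A\,n)}$ in the context $\Gamma = \emp.(\NN\to\UU).\Pi(n:\NN)\norm{\El(A\,n)}$, whose generic point I write as a triple $(*,A,f)$. By the observation just made, it suffices to produce, for each such $(*,A,f)$, an element of $(\Pi(n:\NN)\El(A\,n))(*,A,f)$. Now for every $n \in \NN$ the component $f\,n$ already lies in $\norm{\El(A\,n)}((*,A),n) = \El(A\,n)$ as a set, so setting $h(*,A,f)\,n := f\,n$ provides the required assignment. Its compatibility datum, the path $h(*,A,f)\,\omega$ for $\omega : n \pto n'$ in $\NN$, is then pinned down uniquely: $\NN$ is discrete, forcing $n = n'$ and $\omega = \id_n$, and the fibre $\El(A\,n)$ is likewise discrete, forcing the value in the target singleton.

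The main point—and the reason no countable choice in the meta-theory is invoked—is that no genuine extraction from a truncated set ever takes place. We merely relabel data that is already present, exploiting the fact that $\norm{\cdot}$ in the groupoid model is a pure path-collapse rather than a set-level truncation. There is accordingly no real obstacle; the only work is the routine bookkeeping of compatibility conditions, all of which are trivialised by the discreteness of $\NN$ and of the fibres $\El(A\,n)$.
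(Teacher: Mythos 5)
Your proof is correct and takes essentially the same route as the paper: the paper's one-line proof defines $c\, A\, f = f$ and $c\, \alpha\, \omega = 0$, which is exactly your observation that $\norm{T}\rho = T\rho$ at the object level, so the identity relabelling of $f$ already yields the required section, with all path data forced by the singleton path sets and the discreteness of $\NN$ and $\El(A\,n)$. Your write-up merely spells out the compatibility bookkeeping that the paper leaves implicit.
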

\begin{proof}
It is enough to define $c\, A\, f = f$ and $c\, \alpha\, \omega = 0$ to get $\emp \vdash c:\CC$.
\end{proof}

\onlyinsubfile{
  \bibliography{bibliography}
}

\section{Stack model}

\label{sec:stack}
\onlyinsubfile{
  \title{Stack model of type theory}
  \maketitle
}

\subsection{Groupoid-valued presheaf model}

 We suppose given a poset with elements $U, V, W, X,…$.
The groupoid model extends directly as a groupoid-valued presheaf model over this poset.
A \emph{context} is now a family of groupoids $\Gamma(U)$ indexed by elements of the given poset
such that objects $\rho$ and paths $\alpha$ in
$Γ(U)$ can be restricted to $\rho|V$ and $\alpha|V$ in $Γ(V)$ if $V\subseteq U$ such that
the restriction operation defines a groupoid map $\Gamma(U)\rightarrow\Gamma(V)$ which
is the identity map for $V = U$ and the composite of $Γ(X) → Γ(V)$
and $Γ(U) → Γ(X)$ for $V ⊆ X ⊆ U$.

 For a given context $\Gamma$, we define then what is a family $\Gamma\vdash A$.
It is given by a family of sets $A\rho$ for
each $U$ and $\rho$ in $\Gamma(U)$ together with a restriction $u|V$ in $A(\rho|V)$
for $u$ in $A\rho$ satisfying $u|U = u$ and $(u|X)|V = u|V$, as well as a family of sets $A\alpha(u,u')$ for
each $α : ρ \pto ρ'$ in $Γ(U)$, $u$ in $Aρ$ and $u'$ in $Aρ'$ together with a restriction
$\omega|V$ in $A(\alpha|V)(u|V,u'|V)$ for $ω$ in $Aα(u,u')$ satisfying $ω|U = ω$ and $(ω|X)|V = ω|V$. In
particular, we require the restriction operation on the sets $Aα(u,u')$ to commute with unit and composition.
Such a family is called \emph{small} if
the sets $Aρ$ and $Aα(u,u')$ are elements in
the Grothendieck universe $\mathcal{U}$, and it is called a
\emph{proposition} if the canonical groupoid structure on each $Aρ$
defines a proposition, or, equivalently, if each set $Aα(u,u')$ is a
singleton.
Furthermore, we should have a lifting operation $u\uparrow\alpha$ with
the law $(u\uparrow\alpha)|V = (u|V)\uparrow (\alpha|V)$.
A family is called \emph{discrete} if
the liftings $u↑α$ are uniquely determined:
Given $U$ and $\rho$ in $\Gamma(U)$ and
 given $u$ in $A\rho$ and $\alpha:\rho \pto \rho'$ there is a unique
$u'$ in $A\rho'$ such that $A\alpha(u,u')$ is inhabited, and this set is a singleton in this case.


 We can extend the groupoid model to this setting.

 An element $c$ of $(\Pi A B)\rho$
for $\rho$ in $\Gamma(U)$ is a function $c\, u$ in $B(\rho|V,u)$ for $V\subseteq U$ and
$u$ in $A(\rho|V)$ and $c\, \omega$ in $B(\id_{\rho|V},\omega)(c\, u,c\, u')$ for $\omega$ in $A\id_{\rho|V}(u,u')$
commuting with unit and composition
such that $(c\, a)|W = c\, (a|W)$ and $(c\, \omega)|W = c\, (\omega|W)$ if $W\subseteq V\subseteq U$.



An element in $(\Sigma A B)\rho$ for $\rho \in \Gamma(U)$ is a pair $(a, b)$ where $a \in A\rho$ and $b\in B(\rho, a)$ with restrictions $(a, b)|V = (a|V, b|V)$. Paths in $(\Sigma A B)\alpha ((a, b), (a', b'))$, where $\alpha:\rho \pto \rho'$ are pairs $(\omega, \mu)$ where $\omega: a \pto_\alpha a'$ and $\mu: b \pto_{(\alpha, \omega)} b'$ with restrictions $(\omega, \mu)|V = (\omega|V, \mu|V)$.

Given sections $a_0$ and $a_1$ of $A$, an element in
$(\Path{A}{a_0}{a_1})ρ$ for $ρ ∈ Γ(U)$ is a path $ω : a_0ρ \pto a_1ρ$
with restrictions as in $A$. For every element $ω$ and path $α : ρ
\pto ρ'$ there is a unique path from $ω$ over $α$ going to
$\inv{(a_0α)} · ω · a_1α : a_0ρ' \pto a_1ρ'$.

\subsection{Stack structure}

 We assume given a topological space with a notion of {\em basic open} closed under nonempty intersection
and a notion
of {\em covering} of a given basic open by a family of basic opens. We consider only
coverings $(U_i)_{i ∈ I}$ of some basic open $U$ where the set of indices $I$ is {\em small}.
To simplify the presentation
we assume that each basic open set is {\em nonempty}.
We write $U_{ij}$ for $U_i\cap U_j$ and
$U_{ijk}$ for $U_i\cap U_j\cap U_k$ when they are nonempty.

 Since basic opens form a poset, we can consider the notion of type family
over this poset as defined in the previous subsection.

In the following we will define what is a {\em stack structure} on a type family.

 We recall that a {\em sheaf} $F$ is given by a presheaf, i.e.\ a family of {\em sets} $F(U)$ with restriction
maps $u|V$ in $F(V)$ for $V\subseteq U$ such that $u|U = u$ and $(u|V)|W = u|W$ if
$W\subseteq V\subseteq U$, which satisfies the condition that if we have a covering $(U_i)_{i ∈ I}$
of $U$ and a family of compatible elements $u_i$ in $F(U_i)$ (i.e.\ $u_i|U_{ij} = u_j|U_{ij}$)
then there exists a unique $u$ in $F(U)$ such that $u|U_i = u_i$ for all $i$.

 A type family $Γ ⊢ A$ is called a \emph{prestack} if it satisfies
the following sheaf condition on paths: If $α : ρ \pto ρ'$ is in $\Gamma(U)$,
$u$ and $u'$ are in $Aρ$ and $Aρ'$ respectively
and we have a family of paths $\omega_i : u|U_i \pto_{α|U_i} u'|U_i$ which is compatible
(that is $\omega_i|U_{ij} = \omega_j|U_{ij}$), then
we have a {\em unique} path $\omega : u \pto_α u'$ such that $\omega|U_i = \omega_i$
for all $i$.

For each basic open $U$ and $\rho$ in $\Gamma(U)$ we define what is
the set of {\em descent data} $D(A)\rho$.
A \emph{descent datum} is given by a covering $(U_i)_{i ∈ I}$ of $U$ and
a family of objects $u_i ∈ A(\rho|U_i)$
with paths $\varphi_{ij} : u_i|U_{ij} \pto_{\rho|U_{ij}} u_j|U_{ij}$, when $U_i$ meets $U_j$,
satisfying the \emph{cocycle} conditions\footnote{The first condition is not logically necessary.}
\begin{gather*}
\varphi_{ii} = \id_{u_i} \qquad \varphi_{ij}|U_{ijk} · \varphi_{jk}|U_{ijk} = \varphi_{ik}|U_{ijk}
\end{gather*}
This forms a set since the index set is restricted to be small (otherwise this might be a proper
class in general).

 If $d = (u_i,\varphi_{ij})$ is an element of $D(A)\rho$ and $V\subseteq U$
we define its restriction $d|V$, element of $D(A)\rho|V$, which is the family
$(u_i|V\cap U_i,\varphi_{ij}|V\cap U_{ij})$ restricted to indices $i$ such that $V$ meets
$U_i$. A {\em gluing operation} $\glue d = (u,\varphi_i)$ gives an element
$u$ in $A\rho$ together with paths $\varphi_i : u|U_i \pto u_i$
such that $\varphi_i|U_{ij}\cdot\varphi_{ij} = \varphi_j|U_{ij}$
{\em and} satisfies the law $(\glue d)|V = \glue (d|V)$, that is
$\glue (d|V)$ should be $(u|V,\varphi_i|V\cap U_i)$ where we restrict the family to indices
$i$ such that $V$ meets $U_i$.
 This functoriality property will be crucial for checking
that we do get a model of type theory with dependent product.

 A {\em stack structure} on a prestack $\Gamma\vdash A$ is given by a gluing
operation\footnote{Notice that we shall not require the context $\Gamma$ to be a prestack or have a stack structure.}.

Consider a prestack $\Gamma \vdash A$ and descent datum $d=(u_i,\varphi_{ij}) \in D(A)\rho$ with $\glue d = (u,\varphi_i)$ as above. Let $v \in A\rho$ with paths $\vartheta_i:v|U_i \pto u_i$ satisfying $\vartheta_i|U_{ij} · \varphi_{ij} = \vartheta_j | U_{ij}$. We remark that while it is not necessarily true that $u=v$, the prestack condition implies that we have a path $v \pto u$.

Note that while it is sufficient to define the notion of sheaf as a {\em property} because of the uniqueness part of the sheaf condition, it is crucial that our notion of stack is in general a {\em structure}, i.e.~given with an explicit operation fixing a particular choice of glue.

A stack is not the same as a groupoid object in the sheaf topos. A prime example of a stack whose presheaf of objects is not a sheaf is the universe of sheaves: If we define
$F(U)$ to be the collection of small sheaves over $U$ then there is a natural restriction
operation $F(U)\rightarrow F(V)$ for $V\subseteq U$, and
one can check that the gluing of a compatible family of elements is not unique up to strict
equality in general (but it is unique up to isomorphism).
Notice that if we try to define the stack structure
using global choice as in \cite[3.3.1, page 28]{EGA.i}
then the functoriality condition $(\glue d)|V = \glue (d|V)$ will not hold. There is however
a more canonical definition of gluing which satisfies this condition, which will provide the interpretation
of a univalent universe.

There is also a simple example of a prestack that is
not a stack but whose presheaf of objects is a sheaf. Consider the
topological space given by basic opens $U_1,U_2,U_{12}$ with $U_1 ∧
U_2 = U_{12}$ and the groupoid-valued presheaf $G$ given by the
propositions on the sets $G(1) = ∅$, $G(U_1) = \{x_1\}$, $G(U_2) =
\{x_2\}$ and $G(U_{12}) = \{x_1,x_2\}$. There are no matching families of
objects or morphisms in $G$ so that both the presheaf of objects and
the presheaf of morphisms trivially satisfy the sheaf property.
However, the descent datum given by $d_1 = x_1$, $d_2 = x_2$ and $d_{12}$
the unique path between $x_1|U_{12}$ and $x_2|U_{12}$ cannot have a glue
because $G(1)$ is empty.

Taking as objects the subset $D(A)(ρ,C) ⊆ D(A)ρ$ of descent data on
a covering $C = (U_i)_{i ∈ I}$ of $U$ and a path between two descent data
$(u_i,φ_{ij})$ and $(v_i,ψ_{ij})$ to be a family of paths $ω_i : u_i
\pto v_i$ satisfying $ω_i · ψ_{ij} = φ_{ij} · ω_j$ has a natural
groupoid structure. Moreover, the canonical restriction from $Aρ$ to
$D(A)ρ$ extends to a functor from the canonical groupoid structure on
$Aρ$ to $D(A)(ρ,C)$. The prestack condition for $A$ then says that
this functor is fully faithful and a gluing operation witnesses that
it is essentially surjective. If $A$ is a stack, then the canonical functor $Aρ \rightarrow D(A)(ρ,C)$ is
an equivalence of groupoids.

\subsection{Dependent product}

The collection of types with a stack structure is closed under
dependent product.

\begin{theorem}
If $Γ.A ⊢ B$ has a stack structure then $Γ ⊢ ΠAB$ has a stack structure.
\end{theorem}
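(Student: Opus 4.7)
The plan is to construct the gluing on $\Pi A B$ \emph{pointwise}, reducing at each argument $u \in A(\rho|V)$ to the gluing of $B$ at the context extension $(\rho|V, u)$, and using the prestack condition on $B$ to handle the action on $A$-paths. Let $\rho \in \Gamma(U)$ and let $d = (c_i, \varphi_{ij})$ be a descent datum for $\Pi A B$ over a covering $(U_i)_{i \in I}$ of $U$. I need to produce $\glue d = (c, \varphi_i)$ and verify the cocycle relating the $\varphi_i$ to the $\varphi_{ij}$, together with the strict functoriality $(\glue d)|V = \glue(d|V)$. A preliminary observation is that $\Pi A B$ is itself a prestack: a compatible family of candidate paths $\gamma_i$ yields, at each argument $\omega \in A\alpha(u, u')$, a compatible family in $B$, which the prestack condition on $B$ glues uniquely.

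For the action on objects, given $V \subseteq U$ and $u \in A(\rho|V)$, I restrict the covering to the indices $i$ with $V \cap U_i$ nonempty, set $b_i := c_i\,(u|V \cap U_i)$, and let $\psi_{ij}$ be the evaluation of $\varphi_{ij}|V \cap U_{ij}$ at $\id_{u|V \cap U_{ij}}$. The cocycle for $\varphi$ yields the cocycle for $\psi$, so $(b_i, \psi_{ij})$ is a descent datum for $B$ at $(\rho|V, u)$; I define $c\,u$ as the object part of $\glue_B(b_i, \psi_{ij})$, keeping also the glue paths $\vartheta_i(u) : (c\,u)|V \cap U_i \pto b_i$. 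For the action on $A$-paths, given $\omega \in A\id_{\rho|V}(u, u')$, the conjugates $\inv{\vartheta_i(u)} \cdot c_i\,(\omega|V \cap U_i) \cdot \vartheta_i(u')$ form a compatible family of candidate paths from $c\,u$ to $c\,u'$ over $(\id_{\rho|V}, \omega)$; the prestack condition on $B$ produces a unique glue, which I take as $c\,\omega$. The laws $c\,\id_u = \id_{c\,u}$ and $c\,(\omega_1 \cdot \omega_2) = c\,\omega_1 \cdot c\,\omega_2$ then follow by uniqueness, as the right-hand sides are valid candidates.

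For the connecting paths $\varphi_i$, given $V \subseteq U_i$ and $u \in A(\rho|V)$, I set $(\varphi_i)\,\id_u := \vartheta_i(u)$ (which lives on $V \cap U_i = V$); the compatibility square that $\varphi_i$ must satisfy as an element of $(\Pi A B)\id_{\rho|U_i}(c|U_i, c_i)$ then determines $(\varphi_i)\,\omega$ on general $\omega$ in terms of $\vartheta_i$, $c_i\,\omega$ and $c\,\omega$. The cocycle $\varphi_i|U_{ij} \cdot \varphi_{ij} = \varphi_j|U_{ij}$ reduces, pointwise at each argument, to the cocycle for the glue paths $\vartheta$ in $B$. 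Finally, the restriction laws for $c\,u$ and $c\,\omega$, and the strict functoriality $(\glue d)|V = \glue(d|V)$, all follow by expressing both sides of each equation as a glue of the same descent datum in $B$ and invoking the strict law $(\glue_B e)|W = \glue_B(e|W)$ together with uniqueness in the prestack condition.

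The main obstacle is ensuring that the family of candidate paths in step 2 really is compatible on each overlap $V \cap U_{ij}$: the two paths obtained from the indices $i$ and $j$ must agree, which requires the cocycle $\vartheta_i \cdot \psi_{ij} = \vartheta_j$ in tandem with the naturality built into $\varphi_{ij}$ as a path in $(\Pi A B)\id(c_i, c_j)$ applied to $\omega|V \cap U_{ij}$. This is where all the coherence of the descent datum for $\Pi A B$ has to be used at once. A secondary subtle point — and the reason the strict equation $(\glue d)|V = \glue(d|V)$ is imposed rather than mere existence up to isomorphism — is propagating this equation through the whole pointwise construction; a merely canonical choice of $\glue_B$ would not give a strict equality for $\Pi A B$.
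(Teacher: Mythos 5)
Your proof follows essentially the same route as the paper's: the glue of $\Pi A B$ is constructed pointwise by gluing the descent datum $(c_i\,u, \varphi_{ij}\,u)$ in $B$ at each argument $u$, the action on $A$-paths is obtained as the unique glue (via the prestack condition on $B$) of the conjugated family, the functor laws follow from uniqueness, and the strict functoriality of the $B$-glue is invoked exactly where the paper needs it, namely for restriction of objects. The only quibble is the order of the conjugation $\vartheta_i(u)\cdot c_i\,\omega\cdot\inv{\vartheta_i(u')}$ versus what you wrote, which reads as a composition-order slip rather than a gap.
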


\begin{proof}
Let $(u_i,φ_{ij}) ∈ D(ΠAB)ρ$ be a descent datum on a covering $(U_i)_{i ∈ I}$
of $U$. We construct a glue $(u,φ_i)$ that commutes with restriction.

Given $x, x' ∈ A(\rho|V)$ and $ν : x \pto x'$ on $V ⊆ U$, we
construct $(u\,x,φ_i\,x)$ as the glue of
$d_x = (u_i\,x,φ_{ij}\,x)$ and $u\,ν : u\,x \pto_ν u\,x'$ as the
unique path matching $u\,x \pto u_i\,x \pto_{ν|V∩U_i}
u_i\,x' \pto u\,x'$ given by the composite of $φ_i\,x$, $u_i\,ν$ and the inverse of
$φ_i\,x'$ on $V∩U_i$. If in particular $V ⊆ U_i$, then this completely
determines $φ_i : u|U_i \pto u_i$. The uniqueness of $u\,ν$ is needed to
show that $u$ respects units and composites as well as restriction of
paths. For $u$ to also respect restriction of objects we need
the fact that $(\glue d_x)|W = \glue d_x|W = \glue d_{x|W}$ for $W ⊆ V$.

Let $ω_i : u|U_i \pto_{α|U_i} u'|U_i$ be a matching family of paths.
We show that there is a unique glue $ω : u \pto_α u'$. It is
uniquely determined by the glues $ω\,ν : u\,x \pto_{(α|V,ν)} u'\,x'$
of $ω_i\,ν : u\,x \pto_{(α|V∩U_i,ν|V∩U_i)} u'\,x'$ for $ν : x \pto_{α|V} x'$.
In particular, $ω\,ν = ω_i\,ν$ if $V ⊆ U_i$. Again, the uniqueness of
$ω\,ν$ lets us show that $ω$ respects composites and restrictions.
\end{proof}

\subsection{Universe of sheaves}

 We define $\Univ(V)$ to be the collection of all {\em small} sheaves over $V$.
There is a natural restriction operation $\Univ(V) \rightarrow \Univ(W)$ if $W\subseteq V$.

\begin{theorem}
$\Univ$ has a stack structure.
\end{theorem}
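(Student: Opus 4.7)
The plan is to establish two things for $\Univ$: the prestack condition on paths, which are sheaf isomorphisms, and a canonical gluing operation on descent data of sheaves that commutes strictly with restriction. The non-trivial content is mostly in the latter, since a choice-based construction as in the reference to EGA fails the functoriality law $(\glue d)|V = \glue (d|V)$, so the key idea is to pick a definition of the glued sheaf for which restriction is equality on the nose.

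For the prestack condition, given sheaves $F$ and $F'$ over some basic open $U$ and a compatible family of sheaf isomorphisms $\omega_i : F|U_i \pto F'|U_i$ indexed by a covering $(U_i)_{i \in I}$ of $U$, I would glue them section-wise: for each basic open $W \subseteq U$ and each $x \in F(W)$, the elements $\omega_i(x|W \cap U_i) \in F'(W \cap U_i)$ form a compatible family over the induced covering of $W$, so the sheaf condition on $F'$ yields a unique $\omega_W(x) \in F'(W)$. Naturality in $W$ and uniqueness of $\omega$ are forced by the sheaf property on $F'$; and $\omega$ is automatically an isomorphism because the $\omega_i^{-1}$ glue in the same way, and the composites agree with the identity locally, hence globally by uniqueness.

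For the gluing operation, given a descent datum $d = (F_i,\varphi_{ij}) \in D(\Univ)\rho$ on a covering $(U_i)_{i \in I}$ of $U$, I would define the glued sheaf $F$ by setting $F(W)$, for each basic open $W \subseteq U$, to be the set of matching families $(x_i)$ with $x_i \in F_i(W \cap U_i)$ (ranging over those $i$ with $W$ meeting $U_i$) satisfying $\varphi_{ij}(x_i|W \cap U_{ij}) = x_j|W \cap U_{ij}$, with the obvious restriction maps. A routine assembly using the sheaf property of each $F_i$ shows that $F$ is a sheaf. The comparison isomorphism $\varphi_i : F|U_i \pto F_i$ is projection to the $i$-th component; its inverse sends $y \in F_i(W)$, for $W \subseteq U_i$, to the matching family $(\varphi_{ij}(y|W \cap U_j))_j$, whose compatibility follows from the cocycle condition. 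The equation $\varphi_i|U_{ij} \cdot \varphi_{ij} = \varphi_j|U_{ij}$ is then immediate.

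The hard part is the strict functoriality law $(\glue d)|V = \glue (d|V)$, which is precisely the reason for adopting the matching-family construction. With the above definition it holds on the nose: for $W \subseteq V \subseteq U$, the set $(F|V)(W) = F(W)$ depends only on $F_i|V \cap U_i$ and $\varphi_{ij}|V \cap U_{ij}$ further restricted to $W$, and this is exactly the data defining the glue of $d|V$ evaluated at $W$; the restriction maps and comparison isomorphisms line up correspondingly. The main bookkeeping concerns reindexing over the subcovering $(V \cap U_i)_{i : V \cap U_i \neq \emptyset}$ and verifying that all structural maps match strictly rather than merely up to canonical isomorphism, which is where a naive choice-based construction would break down.
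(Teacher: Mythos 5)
Your proposal is correct and follows essentially the same route as the paper: the glued sheaf is defined as the sheaf of matching families $(x_i)$ with $x_i \in F_i(V \cap U_i)$ compatible under $\varphi_{ij}$, the comparison isomorphisms are the projections (invertible by the cocycle conditions), the sheaf property is checked component-wise, and paths are glued section-wise using the sheaf condition on the codomain. Your explicit verification of the strict functoriality law $(\glue d)|V = \glue(d|V)$ is a worthwhile addition that the paper leaves implicit in the choice of construction.
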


\begin{proof}
Let
$F_i ∈ \Univ(U_i)$ with $φ_{ij} : F_i|U_{ij} \pto F_j|U_{ij}$
be a descent datum on a cover $(U_i)_{i ∈ I}$ of $U$. We construct a glue $F ∈
\Univ(U)$ and $φ_i : F|U_i \pto F_i$. We define $F(V)$ for $V ⊆ U$ as
the set of families $(x_i)_i$ where
$x_i ∈ F_i(V\cap U_i)$ and $φ_{ij}(x_i) = x_j$.
Furthermore, we define $F(V) → F(W)$ for $W ⊆ V$ component-wise by
the restriction $F_i(V\cap U_i) → F_i(W\cap U_i)$ and $φ_i$ by the projection to
the $i$-th component. For $φ_i$ to be an isomorphism we need the fact
$φ_{ii} = \id$ and $φ_{ij} · φ_{jk} = φ_{ik}$.

We claim that the presheaf $F$ satisfies the sheaf property.
Indeed, let $v_k ∈ F(V_k)$ be a matching family for $F$ on a cover $(V_k)_{k ∈ K}$ of
$V$. The $i$-th components of $v_k$ are a matching family for $F_i$
on the induced cover $(V_k\cap U_i)_{k ∈ K}$ of $V\cap U_i$
and the gluing operation $D(F_i)(V_k\cap U_i) → F_i(V\cap U_i)$ of a discrete stack is a bijection
so that we obtain a glue $v
∈ F(V)$ of $v_k$ by gluing component-wise. This glue is unique
because it is component-wise unique.

Let now $ω_i : G|U_i \pto H|U_i$, $G, H ∈ \Univ(U)$
be a matching family of paths on a cover $(U_i)_{i ∈ I}$ of $U$. For $x ∈ G(V)$, $V ⊆ U$ the family $ω_i\,x$ in
$H(V∩U_i)$ is compatible because $ω_i = ω_j$ on $V∩U_{ij}$. We define
$ω\,x$ to be the unique glue in $H(V)$ such that $(ω\,x)|V∩U_i =
ω_i\,x$. The uniqueness of glues allows us to verify that $ω$
respects restriction and that the such defined $ω$ is the unique path
that agrees with $ω_i$ on $U_i$.
\end{proof}

 We define $\Univ\vdash \El\!$ by taking $\El\!F$ to be the small set $F(V)$ if $F$ is
in $\Univ(V)$ and $\El\!\alpha(a,a')$ to be the set $\{0\,|\,\alpha a = a'\}$ if
$\alpha$ is an isomorphism between $F$ and $F'$ in $\Univ(V)$ and $a$ is in $F(V)$
and $a'$ is in $F'(V)$.

\begin{theorem}
The family $\UU\vdash \El\!$ is a {\em universal} small and discrete stack: If $\Gamma\vdash A$ is
small and discrete stack, there exists a {\em unique} map $|A|:\Gamma\rightarrow \UU$ such that
$\El |A| = A$ (with {\em strict} equality).
\end{theorem}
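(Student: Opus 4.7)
For each $ρ ∈ Γ(U)$ I would take $|A|ρ$ to be the presheaf on basic opens $V ⊆ U$ defined by $|A|ρ(V) = A(ρ|V)$ with restriction inherited from $A$; and for each $α : ρ \pto ρ'$ in $Γ(U)$ I would define the natural transformation $|A|α : |A|ρ \pto |A|ρ'$ componentwise by $u \mapsto u(α|V)$, the (unique, by discreteness) endpoint of lifting $u ∈ A(ρ|V)$ along $α|V$. Functoriality of $|A|$ on $Γ$ and naturality of $|A|α$ with respect to restriction follow directly from the laws for path lifting in a discrete family, and $|A|α$ is invertible via lifting along $\inv{α}$.

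The key step is verifying that each $|A|ρ$ is a small sheaf. Smallness is immediate from smallness of $A$. For the sheaf property, given a covering $(V_k)_{k ∈ K}$ of $V ⊆ U$ together with a compatible family $u_k ∈ A(ρ|V_k)$ with $u_k|V_{kl} = u_l|V_{kl}$, I would form the descent datum $(u_k,\id_{u_k|V_{kl}}) ∈ D(A)(ρ|V)$ — the identity paths fit because the $u_k$ agree strictly on overlaps, and the cocycle conditions hold trivially — then apply the stack structure on $A$ to obtain a glue $(u,φ_k)$ with $φ_k : u|V_k \pto u_k$, and finally use discreteness to conclude $u|V_k = u_k$ with $φ_k = \id$. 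For uniqueness of $u$, any competing $u'$ with $u'|V_k = u_k$ produces, via the prestack condition on paths applied to the identity matching family, a path $u \pto u'$ over $\id_{ρ|V}$, which discreteness forces to be a unit, giving $u = u'$.

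To check $\El|A| = A$ strictly, on objects $\El(|A|ρ) = |A|ρ(U) = A(ρ|U) = Aρ$ with matching restrictions, and on paths $\El(|A|α)(u,u') = \{0 \mid u α = u'\}$, which coincides with the subsingleton $Aα(u,u')$ of a discrete family. For uniqueness, given any $f : Γ → \UU$ with $\El f = A$, naturality of $f$ with respect to context restriction forces $fρ(V) = \El(f(ρ|V)) = A(ρ|V)$ with the right restriction maps, and the analogous identity for $α|V$ together with naturality of $fα$ forces its $V$-component to send $u$ to $u(α|V)$; hence $f = |A|$.

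The main obstacle is ensuring \emph{strict} equality $\El|A| = A$ rather than equality up to isomorphism. This hinges on two compatibilities: the subsingleton convention used to present the path sets of a discrete family $A$ must be the same subsingleton $\{0 \mid \cdots\}$ produced by $\El$, and the glue provided by the stack structure must be literally the element of $A(ρ|V)$ whose restrictions are the $u_k$ — which in turn is exactly what discreteness buys us over a generic stack. The sheaf-ness of $|A|ρ$ itself is not difficult once one notices that identity cocycles reduce stack gluing to ordinary sheaf gluing precisely under the discreteness hypothesis.
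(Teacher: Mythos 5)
The paper states this theorem without proof, so there is nothing to compare against; judged on its own, your argument is correct and is the one the authors evidently intend. The essential step is exactly right: you reduce the sheaf condition on objects of $|A|\rho$ to the stack structure of $A$ by packaging a strictly compatible family $u_k$ as a descent datum with identity cocycle, and then use discreteness to upgrade the resulting isomorphisms $\varphi_k : u|V_k \pto u_k$ to strict equalities $u|V_k = u_k$ (and the prestack condition plus discreteness for uniqueness of the glue); the functoriality and uniqueness checks are routine and you handle them correctly. The one caveat you rightly flag is real but belongs to the statement rather than to your proof: strict equality $\El|A| = A$ on path sets requires that the singleton $A\alpha(u,u\alpha)$ of a discrete family be literally the subsingleton $\{0 \mid u\alpha = u'\}$ used in the definition of $\El$, so one must either build this normalization into the definition of ``discrete'' or read the equality up to that canonical identification --- the same issue is already present in the corresponding unproved proposition for the groupoid model.
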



\subsection{Dependent sums}

\begin{theorem}
If $\Gamma \vdash A$ and $\Gamma.A\vdash B$ have stack structures then we can glue descent data and paths in $\Gamma\vdash \Sigma A B$.
\end{theorem}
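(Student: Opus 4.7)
The plan is to glue descent data and paths in $\Sigma A B$ componentwise, using the stack structures of $A$ and $B$ in turn. Consider a descent datum $d = ((a_i, b_i), (\varphi_{ij}, \mu_{ij})) \in D(\Sigma AB)\rho$ on a cover $(U_i)_{i \in I}$ of $U$. Its first projection $(a_i, \varphi_{ij})$ is a descent datum in $D(A)\rho$, so the stack structure of $A$ yields a glue $(a, \varphi_i)$ with $a \in A\rho$, $\varphi_i : a|U_i \pto a_i$ and $\varphi_i|U_{ij} \cdot \varphi_{ij} = \varphi_j|U_{ij}$.

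The $b_i \in B(\rho|U_i, a_i)$ do not yet form a descent datum over $(\rho, a)$ since each $b_i$ sits above $a_i$ rather than $a|U_i$. I would transport along the inverse lift of $\varphi_i$ in the family $B$: set
\[
\tilde b_i := b_i(\id_{\rho|U_i}, \inv{\varphi_i}) \in B(\rho|U_i, a|U_i), \quad \lambda_i := b_i \uparrow (\id_{\rho|U_i}, \inv{\varphi_i}) : b_i \pto \tilde b_i,
\]
and define $\tilde\mu_{ij} := \inv{\lambda_i}|U_{ij} \cdot \mu_{ij} \cdot \lambda_j|U_{ij}$. The base of $\tilde\mu_{ij}$ is $(\id, \varphi_i|U_{ij} \cdot \varphi_{ij} \cdot \inv{\varphi_j}|U_{ij})$, which collapses to the identity by the cocycle for $(\varphi_i)$, so $\tilde\mu_{ij}$ lies in the fibre of $B$ over $(\rho|U_{ij}, a|U_{ij})$; its own cocycle identity follows from that of $(\mu_{ij})$ together with the groupoid laws. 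The stack structure of $B$ at $(\rho, a)$ then produces a glue $(\tilde b, \tilde\psi_i)$ satisfying $\tilde\psi_i|U_{ij} \cdot \tilde\mu_{ij} = \tilde\psi_j|U_{ij}$, and setting $\psi_i := \tilde\psi_i \cdot \inv{\lambda_i}$, a path over $(\id, \varphi_i)$ from $\tilde b|U_i$ to $b_i$, we take
\[
\glue d := ((a, \tilde b), (\varphi_i, \psi_i)).
\]
The cocycle for $\glue d$ decomposes into that of $(\varphi_i)$ and, in the $B$-fibre, the relation just stated; the functoriality law $(\glue d)|V = \glue(d|V)$ is inherited from the functoriality of $\glue$ in $A$ and $B$ together with the strict coherence $(u \uparrow \alpha)|V = (u|V) \uparrow (\alpha|V)$ used to define $\tilde b_i$.

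For the prestack condition, a matching family of paths $(\omega_i, \nu_i) : (a, b)|U_i \pto_{\alpha|U_i} (a', b')|U_i$ glues as follows: the prestack property of $A$ gives a unique $\omega : a \pto_\alpha a'$ with $\omega|U_i = \omega_i$, after which the $\nu_i$ form a matching family of paths over the restrictions of the fixed base $(\alpha, \omega)$ in $\Gamma.A$, and the prestack property of $B$ glues them to a unique $\nu$, yielding the required $(\omega, \nu)$. The hardest part will be the bookkeeping of base paths when forming $\tilde\mu_{ij}$: the cocycle for $(\varphi_i)$ is what makes the transported base path equal to the identity, and the strict coherence $(u \uparrow \alpha)|V = (u|V) \uparrow (\alpha|V)$ for $B$ is what makes the whole double-glue construction functorial in restriction.
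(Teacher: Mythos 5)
Your proposal is correct and follows essentially the same route as the paper: glue the first components in $A$, transport each $b_i$ along the inverse of the resulting $\varphi_i$ to obtain a descent datum in the fibre of $B$ over $(\rho,a)$, glue there, and then compose back with the transport paths; the path-gluing step likewise glues in $A$ first and then in $B$. (Your $\inv{\lambda_i}$ coincides with the paper's $v_i\inv{\omega_i}\uparrow\omega_i$ by the lifting laws, so even the formulas agree.)
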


\begin{proof}
Let $((u_i, v_i),(\omega_{ij},\mu_{ij})) ∈ D(\Sigma AB)ρ$ be a descent datum on a covering $(U_i)_{i ∈ I}$
of $U$. We construct a glue $((u, v), (\omega_i,\mu_i))$ that commutes with restriction.

Let $(u,\omega_i)$ to be the glue of the datum $(u_i, \omega_{ij}) \in D(A)\rho$. We describe a descent datum in $D(B)(\rho,a)$. The object part of this descent datum is given by $v_i \inv{\omega_i} \in B(\rho|U_i, u|U_i)$. We have then paths $(v_i \inv{\omega_i}\uparrow \omega_i): v_i \inv{\omega_i} \pto_{\omega_i} v_i$
and thus paths
\begin{gather*}
(v_i \inv{\omega_i} \uparrow \omega_i)|U_{ij} \cdot \mu_{ij} \cdot \inv{(v_j \inv{\omega_j} \uparrow \omega_j)}|U_{ij}: v_i \inv{\omega_i}|U_{ij} \pto v_j \inv{\omega_j}|U_{ij}
\end{gather*}

These satisfy the cocycle condition. Thus we have a descent datum in $D(B)(\rho, u)$. Let $(v, \mu_i')$ be the glue of this datum.
We have paths
$\mu_i\coloneq \mu'_i \cdot (v_i\inv{\omega_i} \uparrow \omega_i) : v|U_i\pto_{\omega_i} v_i$

We then take the glue of $((u_i, v_i),(\omega_{ij},\mu_{ij}))$ to be given by $((u, v),(\omega_i,\mu_i))$. Since
\begin{gather*}
\mu_i'|U_{ij} \cdot (v_i \inv{\omega_i} \uparrow \omega_i)|U_{ij}\cdot \mu_{ij} \cdot \inv{(v_j \inv{\omega_j} \uparrow \omega_j)}|U_{ij} = \mu_j' |U_{ij}
\end{gather*}
we have that $\mu_i|U_{ij} \cdot \mu_{ij} = \mu_j|U_{ij}$.

Let $\alpha:\rho \pto \rho'$. Given a matching family of paths $(\omega_i, \mu_i) : (u, v) | U_i \pto_{\alpha|U_i} (u', v')$. Since $A$ have a stack structure we have a unique $\omega: u\pto_\alpha u'$ with $\omega|U_i = \omega_i$. But then $\mu_i:v|U_i \pto_{(\alpha, \omega)|U_i} v'|U_i$ is a matching family for the stack $B$ and thus have a unique $\mu : v \pto_{(\alpha, \omega)} v'$ where $\mu|U_i = \mu_i$.
\end{proof}

\subsection{Paths}

Descent data for the \emph{discrete} family $Γ ⊢ \Path{A}{a_0}{a_1}$
correspond to matching families of paths for $A$ and they have unique
glues if $A$ is a prestack. Unique choice then gives
us a function from descent data to glues for $\Path{A}{a_0}{a_1}$
which, also by uniqueness, necessarily commutes with restriction.

\begin{proposition}
If $\Gamma\vdash A$ has a stack structure and $\Gamma\vdash a_0:A$ and
$\Gamma\vdash a_1:A$ then $\Gamma\vdash \Path{A}{a_0}{a_1}$ has a
discrete stack structure.
\end{proposition}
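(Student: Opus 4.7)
The plan is to verify the two components of a stack structure on $\Gamma\vdash \Path{A}{a_0}{a_1}$, namely the prestack (sheaf-on-paths) condition and a gluing operation for descent data that is functorial under restriction. Discreteness of the family itself is immediate from the construction, since every set $(\Path{A}{a_0}{a_1})\alpha(\omega,\omega')$ is by definition the subsingleton $\{0\,|\,\omega\cdot a_1\alpha = a_0\alpha\cdot\omega'\}$, so the only work is in producing the stack structure.

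First I would unfold what descent data and paths look like in this family. A descent datum in $D(\Path{A}{a_0}{a_1})\rho$ on a covering $(U_i)_{i\in I}$ amounts to a family of paths $\omega_i:a_0(\rho|U_i)\pto a_1(\rho|U_i)$ in $A$ together with cocycle data $\varphi_{ij}$, which, being elements of singletons, are automatic and encode exactly the matching condition $\omega_i|U_{ij} = \omega_j|U_{ij}$. Similarly, a path in $\Path{A}{a_0}{a_1}$ over $\alpha:\rho \pto \rho'$ between $\omega$ and $\omega'$ is the subsingleton witness to the equation $\omega\cdot a_1\alpha = a_0\alpha\cdot\omega'$ in $A$.

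The prestack condition follows immediately: a compatible family of such subsingleton witnesses on $(U_i)$ says that the equation $\omega\cdot a_1\alpha = a_0\alpha\cdot\omega'$ holds on each $U_i$, and since $A$ is a prestack any two paths in $A\alpha$ agreeing on a cover coincide, so the equation holds on $U$ and the (automatically unique) witness exists globally. For the gluing operation, the unfolding above shows that a descent datum is precisely a matching family of paths in $A$ between $a_0$ and $a_1$, so the prestack condition on $A$ delivers a unique $\omega:a_0\rho \pto a_1\rho$ with $\omega|U_i = \omega_i$; the required compatibility paths $\varphi_i$ in the glue live again in singletons and are canonically determined. I set $\glue d$ to be this pair.

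The main obstacle, as the paragraph preceding the proposition already hints, is the functoriality law $(\glue d)|V = \glue(d|V)$ for $V\subseteq U$. Both sides are paths $a_0(\rho|V)\pto a_1(\rho|V)$ in $A$ whose restrictions to the induced cover $(V\cap U_i)$ both equal $\omega_i|V\cap U_i$, so the uniqueness clause of the prestack condition on $A$ forces them to coincide. This invocation of uniqueness is exactly the "unique choice" step mentioned in the motivating remark before the proposition, and it is what promotes the mere existence of glues to a genuine operation automatically compatible with restriction.
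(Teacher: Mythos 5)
Your proof is correct and follows essentially the same route as the paper: identify descent data for $\Path{A}{a_0}{a_1}$ with matching families of paths in $A$, obtain the glue from the uniqueness clause of the prestack condition on $A$, and derive functoriality of the gluing operation from that same uniqueness. You simply spell out in more detail what the paper compresses into the remark preceding the proposition.
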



\subsection{Univalence}

An equivalence between two types $A$ and $B$ is a map $f: A\rightarrow B$ such that for each $y:B$ the fiber of $f$ above $y$ is contractible. If both $A$ and $B$ are discrete stacks, then $f$ being contractible means that $f$ is an isomorphism. As in \cite[5.4]{MR1686862}, we have then a one-to-one correspondence between the type of equivalences $\El a \simeq \El b$ and the type of paths $\Path{\Univ}{a}{b}$.




\subsection{Propositional truncation}

 We define the family of sets $\Trunc{A}ρ$ inductively. For every
basic open $U$, $ρ ∈ Γ(U)$ and $u ∈ Aρ$ let $u ∈ \Trunc{A}ρ$.
Moreover, for every covering $(U_i)_{i ∈ I}$ of a given basic open
$U$, $ρ ∈ Γ(U)$ and $u_i ∈ \Trunc{A}(ρ|U_i)$ let $(U_i,u_i)_{i ∈ I} ∈
\Trunc{A}ρ$. Notice that this forms a set since the index set $I$ is restricted to be
small. (Without this restriction, we will get a {\em class} and not a {\em set} in general.)
Then, we define the family of functions $\Trunc{A}ρ → \Trunc{A}(ρ|V)$
recursively. For every pair of basic opens $V ⊆ U$, $ρ ∈ Γ(U)$ and $x
∈ \Trunc{A}ρ$ let $x|V ≔ u|V$ if $x = u$ with $u ∈ Aρ$ and $x|V ≔
(U_j∩V,u_j|U_j∩V)_{j ∈ J}$, where $J ⊆ I$ is the restriction to
indices $i ∈ I$ such that $U_i$ meets $V$, if $x = (U_i,u_i)_{i ∈
I}$ with $u_i ∈ \Trunc{A}(ρ|U_i)$. Lastly, we define the type family $\Trunc{A}$ to be the
proposition on the family of sets and functions just defined. The collection of discrete families $Γ ⊢
A\ \textsf{discrete}$ is \emph{not} closed under propositional
truncation: Given $α : ρ \pto ρ'$ and $x ∈ \Trunc{A}ρ$, then there is
a unique path $x \pto_α x'$ that connects $x$ to each element $x' ∈
\Trunc{A}ρ'$.

The family $Γ ⊢ \Trunc{A}$ always has a stack structure, even without assuming one on $Γ ⊢ A$.
If we have a covering $(V_l)_{l\in L}$ of $U$ and for each $l$ in $L$ we have an element
$x_l$ of $\norm{A}(\rho|V_l)$, then this family $x_l$
always defines in a unique way a descent datum and we can consider the family
$(V_l,x_l)_{l\in L}$, which defines a gluing
of the family $x_l$. This operation furthermore satisfies the functoriality condition.

\subsection{Example: One-point space}

One of the simplest examples of the notion of stack is that where the poset of basic opens has exactly one object $V$. In that case, a covering of $V$ is a nonempty finite family $(V_i)_{i\in I}$ where each $V_i = V$ and a stack is a single groupoid $G$ with a gluing operation. That is, for any family $(u_i)_{i\in I}$ of elements in $G$ and paths $\varphi_{ij}: u_i \pto u_j$ satisfying $\varphi_{ij} · \varphi_{jk} = \varphi_{ik}$ we have $\glue(u_i, \varphi_{ij}) = (u, \varphi_i)$ such that $\varphi_i:u\pto u_i$ and $\varphi_i · \varphi_{ij} = \varphi_j$. We can use this example to motivate the definition of propositional truncation given above. Suppose we naively truncate the groupoid $G$ to get the proposition $\Trunc{G}$ on the objects of $G$, then we have no way of defining the gluing operation on $\Trunc{G}$ since we lack a particular choice of glue for a given descent datum.

\onlyinsubfile{
  \bibliography{bibliography}
}

\section{Countable choice}

\label{sec:choice}
\onlyinsubfile{
  \title{Independence of countable choice from type theory}
  \maketitle
}

\subsection{A stack model where countable choice does not hold}

 We write $U,V,W,\dots$
nonempty open rational intervals included in the open unit interval $(0,1)$. For each $n$, and $i = 1,\dots,n$
we let $U^n_i$ be $((i-1)/(n+1),(i+1)/(n+1))$
so that $(U^n_i)_{i = 1,\dots,n}$ is a covering of $(0,1)$.

 We let $|\NN|$ be the constant presheaf where each $|\NN|(V)$ is the set $\nats$ of natural numbers
and $\NN = \El |\NN|$. We have \cite{vandalen.ii}
\begin{lemma}
$|\NN|$ is a (small) sheaf.
\end{lemma}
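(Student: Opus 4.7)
The plan is to show that any matching family of natural numbers on a covering $(V_i)_{i\in I}$ of a nonempty open rational interval $V$ is necessarily constant, so that its common value serves as the unique glue. Since the restriction maps of $|\NN|$ are identities, uniqueness of the glue is automatic, and the entire content of the sheaf condition is to establish existence.

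First I would unfold the compatibility condition: a matching family $(n_i)_{i\in I}$ with $n_i \in |\NN|(V_i) = \nats$ satisfies $n_i|V_{ij} = n_j|V_{ij}$, which under the identity restriction maps amounts to $n_i = n_j$ whenever $V_i$ meets $V_j$. The task is then to upgrade this ``local'' equality across overlaps into a ``global'' equality across the entire index set $I$.

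The key step is to use topological connectedness of $V$. I would introduce the equivalence relation $\sim$ on $I$ generated by $i \sim j$ when $V_i \cap V_j \neq \emptyset$, so that $i \sim j$ exactly when there is a finite chain $i = i_0, i_1, \dots, i_k = j$ with $V_{i_l} \cap V_{i_{l+1}} \neq \emptyset$ for every $l$. For each equivalence class $C$ the union $W_C = \bigcup_{j \in C} V_j$ is open in $V$, and the sets $W_C$ are pairwise disjoint for distinct classes (otherwise two indices from different classes would give overlapping basic opens, contradicting the definition of $\sim$). Since $V = \bigsqcup_C W_C$ is a partition of the nonempty connected space $V$ into open subsets, there can be only one equivalence class. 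By iterating the compatibility $n_i = n_j$ along a connecting chain, all $n_i$ share a common value $n \in \nats$, and this $n$ is the desired glue with $n|V_i = n = n_i$ for every $i$.

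There is no real obstacle here; the only point worth flagging is the appeal to connectedness of $V$, which is a standard property of intervals in $\mathbb R$ and is the reason the argument works specifically for basic opens in $(0,1)$ rather than for arbitrary open sets (a disconnected basic open would allow genuinely distinct constant values on its components and would break the sheaf property for $|\NN|$).
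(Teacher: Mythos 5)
Your proof is correct. The paper itself gives no argument for this lemma---it simply cites the literature (van Dalen)---so there is nothing to compare against, but the argument you give is the standard one and it is complete: the compatibility condition collapses to $n_i = n_j$ on overlaps because the restriction maps of the constant presheaf are identities, the chain-equivalence classes of the index set induce a partition of $V$ into disjoint nonempty open sets, and connectedness of the interval $V$ forces a single class, hence a single common value serving as the (automatically unique) glue. Your closing remark is also the right one to flag: the argument genuinely depends on the basic opens being connected, which is exactly why the paper restricts to nonempty open rational intervals and why the same constant presheaf would fail the sheaf condition on a disconnected basis.
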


 It is also well-known that in the sheaf model over $(0,1)$, there are Dedekind reals
that are not Cauchy reals \cite{vandalen.ii}. It is simple to transform this fact to a counter-example
to our type-theoretic version of countable choice.

We define
$A:\NN\rightarrow \UU$ by letting $A\, n$ be the subsheaf of the (small) constant sheaf $|\mathsf{Q}|(V) = \rats$ of rational numbers
\begin{gather*}
(A\, n)(V) = \left\{r\in\rats\, \middle|\, \forall (x\in V)\, \abs{x-r}<\frac{1}{n+1}\right\}
\end{gather*}

 Notice that each ${i}/(n+1)$ is an element of $(A\, n)(U^n_i)$.

\begin{proposition} In this model
\begin{enumerate}
\item the type $\Pi (n:\NN)\norm{\El (A\, n)}$ is inhabited
\item the type $\Pi (n:\NN)\El (A\, n)$, and hence also the type $\norm{\Pi (n:\NN)\El (A\, n)}$, is empty
\end{enumerate}
\end{proposition}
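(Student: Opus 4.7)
The plan is, for (1), to build for each natural number $n$ and each basic open $V$ a representative in $\norm{\El(A\,n)}(V)$ via the covering $(V \cap U^n_i)_{i \in I^n_V}$, where $I^n_V = \{i \in \{1,\dots,n\} : V \cap U^n_i \neq \emptyset\}$, together with the local rationals $i/(n+1) \in (A\,n)(V \cap U^n_i)$. Indeed, for each $x \in V \cap U^n_i \subseteq U^n_i$ one has $|x - i/(n+1)| < 1/(n+1)$. By the stack structure on propositional truncation recalled at the end of the previous section, the family $(V \cap U^n_i, i/(n+1))_{i \in I^n_V}$ is a well-defined element of $\norm{\El(A\,n)}(V)$, and the functoriality law $(\glue d)|W = \glue(d|W)$ built into that gluing makes the assignment natural in $V$. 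Since $\NN$ is the constant sheaf and $n$ enters only as a parameter, this defines a section of $\emp \vdash \Pi(n:\NN)\norm{\El(A\,n)}$.

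For (2), I plan to compute explicitly when $(A\,n)(V)$ is inhabited. Any $r \in (A\,n)(V)$ satisfies $r - 1/(n+1) < x < r + 1/(n+1)$ for every $x \in V$, which forces $r - 1/(n+1) \le \inf V$ and $r + 1/(n+1) \ge \sup V$; hence $(A\,n)(V)$ is empty whenever the length of $V$ exceeds $2/(n+1)$. In particular $(A\,n)((0,1)) = \emptyset$ for $n \ge 2$. A global term of $\emp \vdash \Pi(n:\NN)\El(A\,n)$ would produce, upon evaluation at each $n$ and at the basic open $(0,1)$, an element of $(A\,n)((0,1))$, which is ruled out.

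To see that $\norm{\Pi(n:\NN)\El(A\,n)}$ is likewise empty, I induct on a putative element of the truncation at an arbitrary nonempty basic open $V$: the base case supplies an actual section of $\Pi(n:\NN)\El(A\,n)$ at $V$, which demands $(A\,n)(V) \neq \emptyset$ for \emph{every} $n$ and fails as soon as $n + 1 > 2/|V|$; the inductive case $(V_l, x_l)_{l \in L}$ decomposes by the same argument applied to each nonempty $V_l$, so no element can be built by any finite unfolding of the inductive clauses.

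The main point requiring care is the semantic bookkeeping: unfolding what a global section means in the presheaf model, and verifying in (1) that the gluing operation on $\norm{-}$ is genuinely functorial in $V$ --- the very feature that lets (1) succeed while (2) fails. Once those are in hand, both parts reduce to the elementary geometric fact that $|V| \le 2/(n+1)$ is necessary and sufficient for a rational to approximate every point of $V$ to within $1/(n+1)$.
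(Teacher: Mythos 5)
Your proposal is correct and follows essentially the same route as the paper: part (1) uses exactly the family $(V\cap U^n_i,\,i/(n+1))$ with the functoriality of restriction, and part (2) rests on the observation that $(A\,n)(V)$ is empty once $n$ is large relative to the length of $V$, for \emph{every} basic open $V$ (which is what kills the truncation as well). You merely make explicit two points the paper leaves implicit, namely the quantitative bound $|V|\le 2/(n+1)$ and the induction over the inductive definition of $\norm{-}$ justifying the ``hence also''.
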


\begin{proof}
 For each open set $V$, we let
$s_V\, n$ be the family $(V\cap U^n_i,i/(n+1))$, $i$ such that $V$ and $U^n_i$ meet,
in $\norm{\El (A\, n)}(V)$. Since we have
$(s_V\, n)|W = s_W\, n$ if $W\subseteq V$, this defines a section of  $\Pi (n:\NN)\norm{\El (A\, n)}$.

  For the second point, it is enough to notice that, for each given $V$, the set
\begin{gather*}
(A\, n)(V) =  \left\{r\in\rats\, \middle|\, \forall (x\in V)\, \abs{x-r}<\frac{1}{n+1}\right\}
\end{gather*}
is empty for $n$ large enough.
\end{proof}

\begin{corollary}
In this model, the principle of countable choice $\CC$ does not hold.
\end{corollary}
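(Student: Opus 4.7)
The plan is to deduce the corollary directly from the preceding proposition by observing that any hypothetical inhabitant of $\CC$ in the model would yield an inhabitant of a type just proved to be empty. Since $\CC$ expands to $\Pi (A:\NN\rightarrow\UU) (\Pi (n:\NN) \norm{\El (A\, n)})\rightarrow \norm{\Pi (n:\NN) \El (A\, n)}$, a section $c$ of $\CC$ over the empty context can be applied to any closed family $\emp \vdash A:\NN\rightarrow\UU$ together with any section of the corresponding antecedent.

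The key step is to apply such a hypothetical $c$ to the specific family $A$ constructed in the preceding subsection and to the section of $\Pi (n:\NN)\norm{\El(A\, n)}$ produced in part (1) of the proposition. The output is then a section of $\norm{\Pi (n:\NN)\El(A\, n)}$ over the unit interval $(0,1)$, contradicting part (2) of the proposition, which asserts that this very type has no global section. Hence no such $c$ can exist in the model.

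The substantive work has already been carried out in exhibiting the counter-example $A$ and verifying its two properties, so no new obstacle arises here; the corollary is an immediate instance of modus ponens at the semantic level. The only thing one should keep in mind is the purely formal fact that dependent products in the stack model are interpreted so that applying $c$ to the above data genuinely produces a global element of $\norm{\Pi (n:\NN)\El(A\, n)}$, which follows from the interpretation of $\Pi$ established earlier in the stack model.
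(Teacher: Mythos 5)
Your argument is correct and is exactly the one the paper intends: the corollary is stated without proof because it follows immediately from the preceding proposition by applying a hypothetical section of $\CC$ to the family $A$ and the inhabitant of $\Pi (n:\NN)\norm{\El (A\, n)}$ from part (1), contradicting the emptiness established in part (2). Your added remark about the semantic interpretation of $\Pi$ and application is a reasonable precaution but introduces nothing beyond what the paper already assumes.
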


\begin{corollary}
One cannot show countable choice in type theory with one univalent universe and propositional truncation.
\end{corollary}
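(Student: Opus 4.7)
The plan is to derive this corollary as a direct consequence of the previous one via a standard soundness argument. If countable choice $\CC$ were provable from the rules in figures \ref{fig:tt}, \ref{fig:tt-univ}, \ref{fig:tt-path} and \ref{fig:tt-sigma-nat-bool} together with the univalence axiom and the propositional truncation rules, then by soundness of the interpretation, a derivation would produce a section $\emp \vdash c : \CC$ valid in every model of these rules. So it suffices to exhibit one such model in which $\CC$ fails.

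The construction in section~\ref{sec:stack}, instantiated to the poset of nonempty open rational subintervals of $(0,1)$, gives exactly such a model. Concretely, I would first recall that the theorems and propositions of section~\ref{sec:stack} have established the interpretations of dependent products, dependent sums, paths, the universe $\Univ$, and propositional truncation $\norm{-}$, each as a stack (or discrete stack) with the required functoriality with respect to restriction, and that these interpretations satisfy the strict equalities of the type theory. Univalence is obtained from the equivalence between $\El a \simeq \El b$ and $\Path{\Univ}{a}{b}$ recorded in the univalence subsection. Hence, for the specific topological space used in section~\ref{sec:choice}, we have a model of dependent type theory with one univalent universe and propositional truncation.

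The previous corollary then supplies the failure of $\CC$ in this model: by the proposition preceding it, $\Pi (n{:}\NN)\,\norm{\El(A\,n)}$ is inhabited by the section $s$ built from the rational midpoints $i/(n+1)$ of the intervals $U^n_i$, while $\norm{\Pi(n{:}\NN)\,\El(A\,n)}$ is empty because $(A\,n)(V)$ becomes empty once $n$ is large enough that $V$ is wider than $2/(n+1)$. If we had a derivation of $\emp \vdash c : \CC$, applying $c$ to $|A|$ and to $s$ would produce an element of $\norm{\Pi(n{:}\NN)\,\El(A\,n)}$, which is impossible.

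There is no real obstacle here, only bookkeeping: the entire content of the argument is already contained in the model construction and in the previous corollary. The only thing that needs to be checked explicitly, and which I would note in the writeup, is that the type theory whose independence we claim is exactly the one the stack model interprets, so that the soundness direction ``provable implies valid in every model'' genuinely applies and yields the contradiction.
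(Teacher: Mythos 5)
Your proposal is correct and matches the paper's (implicit) argument exactly: the paper states this corollary without further proof precisely because it is the standard soundness consequence of the preceding corollary, namely that the stack model over $(0,1)$ interprets all the rules of type theory with one univalent universe and propositional truncation while refuting $\CC$. Your added care about checking that the interpreted theory coincides with the theory whose independence is claimed is reasonable bookkeeping but introduces nothing beyond what the paper already establishes in Section~\ref{sec:stack}.
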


\onlyinsubfile{
  \bibliography{bibliography}
}

\section{Markov's principle}

\label{sec:markov}
\onlyinsubfile{
  \title{Independence of Markov's principle from Type Theory}
  \maketitle
}

 The interpretation of the type $\NN$ was especially simple on the space $(0,1)$
using the fact that its basic opens are {\em connected}. We will now consider the ``dual''
case where the space is {\em totally disconnected}. We assume from now on that
the basic opens are nonzero elements $e,e',\dots$ of a Boolean algebra with decidable equality.
We consider only coverings of $e$ given by a finite partition $e_i$, $i\in I$, of $e$,
that is a
{\em finite set} of disjoint elements $e_i \leqslant e$ such that
$e = \bigvee_{i\in I} e_i$.

Given a type family $\Gamma\vdash A$ and $\rho \in \Gamma(e)$, a descent datum $d ∈ D(A)\rho$ for this family
is now simply given by a partition $e_i$, $i\in I$, of $e$ and a family $u_i \in A\rho | e_i$.

We can now {\em strengthen} the notion of stack structure by further imposing that
we have $(\glue d)|e_i = u_i$ for $d=(u_i) \in D(A)\rho$. This \emph{strict gluing condition} states that the required equalities between $(\glue d)|e_i$ and $u_i$
are {\em strict} equalities.
In fact, it is enough to require $\glue (u) = u$ for partitions consisting of exactly one element.

This refinement is needed for the elimination of natural numbers and Booleans in the universe.

\begin{proposition}
If $\Gamma.A\vdash B$ satisfies the strict gluing condition, then so does $\Gamma\vdash\Pi A B$.
\end{proposition}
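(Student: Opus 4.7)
The plan is to revisit the glue $u$ of $\Pi AB$ constructed in the earlier dependent product theorem and verify that, specialized to the Boolean partition setting, it satisfies the strict gluing condition whenever $B$ does. Fix a descent datum $(u_i)_{i \in I}$ on a partition $(e_i)_{i \in I}$ of $e$ for $\Pi AB$ at $\rho \in \Gamma(e)$. Since the $e_i$ are pairwise disjoint, no cocycle data are required and the datum reduces to the family $(u_i)$. Following the earlier construction, for $x \in A(\rho|e')$ with $e' \leq e$, $u\,x$ is the $B$-glue of the descent datum $d_x = (u_i\,x)$ on the induced partition $(e' \wedge e_i)_i$ of $e'$ (restricted to indices where the meet is nonzero), and for a path $\nu : x \pto x'$ the value $u\,\nu$ is the unique $B$-path whose local restrictions are $\varphi_i\,x \cdot u_i\,\nu \cdot \inv{(\varphi_i\,x')}$ on $e' \wedge e_i$.

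To verify the strict condition $u|e_j = u_j$ for each $j \in I$, pick any $x \in A(\rho|e')$ with $e' \leq e_j$. Because $(e_i)$ is a partition and $e' \leq e_j$, we have $e' \wedge e_i = 0$ for $i \neq j$ and $e' \wedge e_j = e'$. Hence the induced partition collapses to the single-element partition $\{e'\}$ with descent datum $(u_j\,x)$. The strict gluing hypothesis on $B$ applied to this single-element partition gives $u\,x = u_j\,x$ strictly, and likewise delivers $\varphi_j\,x = \id_{u_j\,x}$, which is the natural reading of the paper's equivalent formulation ``$\glue(u) = u$ for partitions consisting of exactly one element''. The local data for $u\,\nu$ then collapses to $\id \cdot u_j\,\nu \cdot \id = u_j\,\nu$, so $u\,\nu = u_j\,\nu$ by the uniqueness clause. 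Combining the object and path parts, $u|e_j = u_j$ strictly in $(\Pi AB)(\rho|e_j)$.

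I do not anticipate any serious obstacle: this is essentially a specialization of the general $\Pi$-glue construction to the Boolean partition case, with the extra equation checked by reducing restriction to $e_j$ to the single-element partition. The one point that needs care is the treatment of the connecting paths $\varphi_i\,x$ appearing in the construction of $u\,\nu$, where we must use not only that the strict glue of $B$ returns the original object on a single-element partition, but that it also returns an identity connecting path, as made consistent with the paper's single-element reformulation of the strict gluing condition.
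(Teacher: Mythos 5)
The paper states this proposition without proof, so there is no official argument to compare against; your proposal is a reasonable reconstruction and is essentially the one the general $\Pi$-gluing theorem invites. Your reduction is sound: by disjointness of the partition, restricting the $\Pi$-glue to $e_j$ makes every induced covering $(e'\wedge e_i)_i$ of an $e'\leq e_j$ collapse to the single-element partition $\{e'\}$, so the object part $u\,x = u_j\,x$ follows directly from the single-element form of the strict condition on $B$; this is in effect the same reduction the paper gestures at when it remarks that it suffices to require $\glue(u)=u$ for one-element partitions (given that the $\Pi$-glue already satisfies $(\glue d)|V = \glue(d|V)$). The one genuinely delicate point is exactly the one you flag: the path part $u\,\nu$ is the conjugate $\varphi_j\,x\cdot u_j\,\nu\cdot\inv{(\varphi_j\,x')}$, and the literal statement $(\glue d)|e_i = u_i$ constrains only objects, so you must read the single-element condition $\glue(u)=u$ as asserting that the glue is the pair $(u,\id_u)$, i.e.\ that the connecting path is an identity. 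This reading is not derivable from the cocycle or functoriality laws alone, but it is forced if the proposition is to hold and it is consistent with the paper's strictified constructions (for instance the refined universe gluing, where $F(e')$ is taken to be \emph{exactly} $F_i(e')$ for $e'\leq e_i$ and the $\varphi_i$ are literal identities). With that reading made explicit, your argument goes through.
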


\begin{proposition}
If $\Gamma\vdash A$ and $\Gamma.A\vdash B$ satisfy the strict gluing condition, then so
does $\Gamma\vdash\Sigma A B$.
\end{proposition}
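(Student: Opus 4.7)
The plan is to run the explicit glue construction for $\Sigma A B$ given in the preceding theorem and observe that when $A$ and $B$ satisfy the strict gluing condition, every witnessing path in the construction collapses to an identity, so the strict condition for $\Sigma A B$ drops out automatically.

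In the Boolean-algebra setting the pieces of a partition $(e_i)_{i\in I}$ of $e$ are pairwise disjoint, hence $e_i\wedge e_j = 0$ for $i\neq j$, so the cocycle data of a descent datum is vacuous and an element of $D(\Sigma A B)\rho$ is simply a family $((u_i, v_i))$ with $u_i \in A(\rho|e_i)$ and $v_i \in B(\rho|e_i, u_i)$.

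First I would glue the $(u_i)$ in $A$ to obtain $(u, \omega_i)$ with $\omega_i : u|e_i \pto u_i$. The functoriality law $(\glue d)|V = \glue(d|V)$ reduces the restriction of this glue to $e_i$ to the one-piece glue $\glue(u_i)$, and the strict gluing condition for $A$ — which, as the paper notes, is equivalent to $\glue(u) = u$ on one-piece partitions — forces both $u|e_i = u_i$ strictly and the witnessing path $\omega_i$ to be $\id_{u_i}$. Next, with every $\omega_i$ an identity, the auxiliary descent datum in $D(B)(\rho, u)$ used to build the second component simplifies dramatically: the lifted element $v_i\inv{\omega_i}$ is just $v_i$, the path $v_i\inv{\omega_i}\uparrow\omega_i$ is the identity, and the cocycle paths of the $B$-datum collapse as well. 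Gluing $(v_i)$ in $B$ under its strict condition yields $v$ with $v|e_i = v_i$ strictly and $\mu'_i = \id$; composing with the identities above gives $\mu_i = \id$. Therefore the glue of $((u_i, v_i))$ in $\Sigma A B$ is $((u,v),(\id,\id)_i)$ with $(u,v)|e_i = (u_i, v_i)$ strictly.

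The only real subtlety I anticipate is reading the strict gluing condition correctly: one must confirm that it pins down not only the underlying object of a one-piece glue but also forces the accompanying path to be $\id$, since this is what makes the $B$-datum degenerate in the second stage of the construction. Once this is established, the result follows from a straightforward chase through the $\Sigma$-glue construction of the previous theorem, with no new compatibility equations to verify.
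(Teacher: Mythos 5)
The paper states this proposition without proof, so there is no official argument to compare against; your strategy of specializing the $\Sigma$-glue construction of the dependent-sums theorem to the Boolean setting is the natural one. Your reductions are correct: disjointness of the partition pieces makes the cocycle data vacuous, and functoriality together with the one-piece condition $\glue(u) = u$ gives $u|e_i = u_i$ strictly.

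However, the step you flag as a "subtlety to be confirmed" is a genuine gap, and it cannot be confirmed from what the paper gives you: the strict gluing condition as stated constrains only the object part of the glue, $(\glue d)|e_i = u_i$, and says nothing about the witnessing paths $\omega_i : u|e_i \pto u_i$. After strictification these are loops at $u_i$, and for a non-discrete family nothing forces a loop to be $\id_{u_i}$; so your claim that the condition "forces the accompanying path to be $\id$" does not follow, and with a nontrivial $\omega_i$ your second stage only yields $v|e_i = v_i\inv{\omega_i}$, not $v_i$. The repair is to notice that you do not need $\omega_i$ at all: once $u|e_i = u_i$ holds strictly, each $v_i$ already lies in $B(\rho|e_i, u|e_i) = B((\rho,u)|e_i)$, so the family $(v_i)$ is itself a descent datum for $B$ over $(\rho,u)$ with no transport along $\inv{\omega_i}$ required. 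Since a stack structure is a structure rather than a property, you may define the glue of $((u_i,v_i))$ to be $((u,v),(\id_{u_i},\id_{v_i}))$ where $u$ is the glue of $(u_i)$ in $A$ and $v$ is the glue of $(v_i)$ in $B$; strictness in both components and the functoriality law are then inherited directly from $A$ and $B$, and the identity witnessing paths satisfy the (vacuous) compatibility conditions. This bypasses the transported datum $v_i\inv{\omega_i}$ of the general construction entirely and closes the gap.
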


 If $\Univ(e)$ is the collection of sheaves on $e$, we can refine the stack structure on $\Univ$
in order to satisfy the strict gluing condition: If $e_i$ is a partition of $e$ and $F_i$ is a
sheaf on $e_i$ we define $F = \glue (e_i,F_i)$ by taking $F(e')$, for $e'\leqslant e$,
to be the product of all $F_i(e'\wedge e_i)$ if $e'$ meets strictly more than one $e_i$, and to be
{\em exactly} $F_i(e')$ if $e'\leqslant e_i$. This defines a sheaf, and the functoriality law
$\glue (e_i,F_i)|e' = \glue (e_i\wedge e',F_i|e_i\wedge e')$ is satisfied.

\subsection{Natural numbers and Booleans}

 We define the sheaf $|\Nat|$ by taking $|\Nat|(e)$ to be the set of families $(e_i,n_i)$ where $e_i$
is a partition of $e$ and $n_i\neq n_j$ if $i\neq j$.
We define similarly $|\Bool|$ where $n_i$ can only take the values $0$ or $1$, and
$|\Top|(e) = \{0\}$, and $|\Bot|(e)$ is the empty set.
We define then $\NN = \El |\NN|$ and similarly for $\Bool$, $\Top$ and $\Bot$.

We define $\suc (e_i,n_i)$ to be $(e_i,n_i+1)$ and $\0(e)$ is the element $(e,0)$.

The $\natrec\!$ operator is then defined as a section of $Γ.\Nat ⊢ C$
\begin{alignat*}{2}
& (\natrec c\, d)(\rho,0)         &&= c\rho \\
& (\natrec c\, d)(\rho,n+1)       &&= d(\rho, n, (\natrec c\, d)(\rho,n)), \text{ where } n\in \nats\\
& (\natrec c\, d)(\rho,(e_i,n_i)) &&= \glue (e_i,(\natrec c \, d)(\rho|e_i,n_i))
\end{alignat*}
given sections $Γ ⊢ c : C\ir{\0}$ and $Γ.\Nat.C ⊢ d : C\ir{\suc \q}\p$.

We remark that the strict gluing condition is needed to make the above definition work, i.e.~so that for $m \in \Nat(e)$ and $e' ⩽ e$ we have $((\natrec c\, d)\rho \, m)|e' = (\natrec c\, d)\rho|e' \, m|e'$.

\subsection{A stack model where Markov's principle does not hold}

We can express Markov's principle in type theory by the type:
\begin{gather*}
\MP \coloneq \Pi(h: \Nat\rightarrow \Bool)(\neg\neg (\Sigma (x: \Nat)\El \isZero\,(h\,x)) \rightarrow \Sigma (x: \Nat) \El \isZero\,(h\,x))
\end{gather*}
where $\isZero: \Bool \rightarrow \Univ$ is defined by $\isZero \coloneq \lambda y.\boolrec \In{\Top}\,\In{\Bot}\,y$ and the type $¬A$ by $A → \Bot$.

We could also consider the version where we use weak existential $\exists (x:A) B = \norm{\Sigma (x:A) B}$
instead of sigma type, but the two versions are logically equivalent \cite[Exercise 3.19]{hottbook}.

 Take a countably infinite set of variables $p_0,p_1,\dots$.
Consider the free Boolean algebra generated by the atomic formulae $p_n$.
We write $p_n = 0$ for $\neg p_n$ and $p_n = 1$ for $p_n$.
An object $e$ in this algebra represents then a compact open in Cantor space $\{0,1\}^\nats$, where
a conjunctive formula $\bigwedge p_i = b_i$ represents the set of sequences in $\{0,1\}^\nats$
having value $b_i$ at index $i$. A formula $e$ in the algebra is then a finite disjunction of these.

 We have an interpretation of type theory in stacks over this algebra, and we are going to see that
Markov's principle is not valid in this interpretation. We define $\f$ in $\Nat \rightarrow \Bool$ by
taking $\f n$, $n ∈ \Nat(e)$, at $e_i$ to be $((e_0,0),(e_1,1))$ where $e_b$ is $e_i\wedge (p_{n_i} = b)$ if $e_i$ meets both
$(p_{n_i} = 0)$ and $(p_{n_i} = 1)$, and to be $(e_i,b)$ if $e_i\leqslant (p_{n_i} = b)$.

\begin{proposition}
In this model
\begin{enumerate}
\item $\neg\neg (\Sigma (x: \Nat)\,\El \isZero\,(\f x))$ is inhabited.
\item $\Sigma (x: \Nat)\, \El \isZero\,(\f x)$ is not inhabited.
\end{enumerate}
\end{proposition}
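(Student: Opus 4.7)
The plan is to verify the two items separately by arguing at the level of the free Boolean algebra on generators $p_0, p_1, \dots$; write $A \coloneq \Sigma(x: \Nat)\,\El\,\isZero\,(\f x)$.

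For part 1, the idea is to exploit that $\Bot$ is interpreted by the identically empty sheaf, so $\Bot(e) = \emptyset$ for every nonzero basic open $e$. Consequently, to exhibit a global section of $\neg\neg A = (A \to \Bot) \to \Bot$ it suffices to show that the set of elements of $A \to \Bot$ at each nonzero basic open $e$ is already empty; the required section is then defined vacuously at every $e$ and the compatibility with restriction is automatic. To prove this emptiness, I would suppose toward contradiction that some $h \in (A \to \Bot)(e)$ existed at a nonzero $e$. Since $e$ is a finite Boolean combination of the generators, I can pick an index $n$ outside its finite support, so that $e' \coloneq e \wedge (p_n = 0)$ is still nonzero. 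By the definition of $\f$, on $e'$ the Boolean $\f\,n$ equals $0$; hence $\isZero\,(\f\,n) = \In{\Top}$ on $e'$, and the canonical inhabitant $\star$ of $\Top(e')$ gives an element $(n, \star) \in A(e')$. Applying $h|_{e'}$ to $(n, \star)$ would yield an element of $\Bot(e') = \emptyset$, the desired contradiction.

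For part 2, a global element of $\Sigma(x: \Nat)\,\El\,\isZero\,(\f x)$ amounts to a natural number $n \in \Nat(1)$ --- i.e.\ a finite partition $1 = \bigvee_{i \in I} e_i$ with pairwise distinct $n_i$ attached --- together with an inhabitant of $\El\,\isZero\,(\f\,n)$. The latter forces $\f\,n$ to equal $0$ on every $e_i$, which by the definition of $\f$ amounts to $e_i \leqslant (p_{n_i} = 0)$ for each $i$. I would derive a contradiction by considering any valuation $s$ of the generators sending each of the finitely many $n_i$ to $1$: $s$ belongs to the top element $1$ but satisfies none of the constraints $p_{n_i} = 0$, so lies in no $e_i$, contradicting the covering condition.

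The step I expect to demand the most care is the emptiness argument in part 1: one must check that $(n, \star)$ is genuinely a well-formed element of $A(e')$ under the stack model's interpretation of $\Sigma$, $\El$ and the recursive definition of $\isZero$, using the strict gluing condition to collapse $\boolrec$ applied to the constant Boolean $0$ on $e'$ to $\In{\Top}$, and to verify that the vacuous definition of the double-negation section respects all the restriction and path data required by the semantics of dependent product.
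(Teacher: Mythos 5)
Your proof is correct and takes essentially the same route as the paper's: for part 1 you show $(\neg A)(e)$ is empty by exhibiting an inhabitant of $A$ on $e'=e\wedge(p_n=0)$ with $n$ chosen outside the finite support of $e$ (the paper's ``$n$ big enough''), and for part 2 you rule out a global section by evaluating at a point of Cantor space where all the finitely many $p_{n_i}$ equal $1$, which is exactly the paper's use of the constant-$1$ sequence. The differences are purely presentational.
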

\begin{proof}
To show that $\neg\neg (\Sigma (x: \Nat) \El \isZero\,(\f x))$ is inhabited it is sufficient
to show that for all $e$ the set $(\neg (\Sigma (x: \Nat) \El \isZero\,(\f x)))(e)$ is empty.
For that it will be sufficient to show that for some $e' \leqslant e$
we have that $(\Sigma (x: \Nat) \El \isZero\,(\f x))(e')$ is not empty.
But given any $e$ we can simply choose $e' = (p_n = 0) \land e$ for some $n$ big enough.
Thus $\El \isZero\,(\f n)$ at $e'$ is $\{0\}$ and $(\Sigma (x: \Nat) \El \isZero\,(\f x))(e')$ is not empty.

We now show that $\Sigma (x: \Nat) \El \isZero\,(\f x)$ is not inhabited. For any $n = (e_i,n_i)$  in $\Nat(1)$
where $(e_i)$ is a partition of $1$, we can find exactly one $e_i$ which contains (as a compact open subset
of Cantor space) the constant function $1$. This element $e_i$ meets $p_{n_i} = 1$ so that
$\El \isZero\,(\f n)$ is the empty set at $(p_{n_i} = 1) \land e_i$ and hence also at $1$.
\end{proof}

\begin{corollary}
In this model Markov's principle does not hold.
\end{corollary}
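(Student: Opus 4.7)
The plan is to derive the corollary as an immediate consequence of the preceding Proposition, mirroring the way the countable choice corollary was obtained in the earlier section. I would argue by contradiction: suppose there were a section $\emp \vdash m : \MP$ in the stack model over the free Boolean algebra generated by the atoms $p_n$. Instantiating $m$ at $\f : \Nat \to \Bool$ yields a section of $\neg\neg(\Sigma(x:\Nat)\,\El\isZero\,(\f\,x)) \to \Sigma(x:\Nat)\,\El\isZero\,(\f\,x)$. Feeding this the section of $\neg\neg(\Sigma(x:\Nat)\,\El\isZero\,(\f\,x))$ provided by part~(1) of the Proposition then produces a section of $\Sigma(x:\Nat)\,\El\isZero\,(\f\,x)$, which directly contradicts part~(2).

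The only step that carries any real content is checking that $\f$ genuinely defines a section of $\Nat \to \Bool$ in the model: the case-split definition of $(\f\,n)$ on each piece $e_i$ of the partition representing $n \in \Nat(e)$ must be natural in restrictions and compatible with gluing on $\Bool$. For $e' \leqslant e_i$, refining $e_i$ can turn the ``meets both $(p_{n_i} = 0)$ and $(p_{n_i} = 1)$'' branch into the ``contained in one'' branch, but the two clauses are designed so that the resulting element of $\Bool(e')$ agrees in either case. The strict gluing condition on $\Bool$ established in the previous subsection then guarantees that $\f\,n$ globally assembles from its pieces in the expected way, so the refinement on $e'$ and the direct evaluation at $n|e'$ coincide. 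I expect this naturality check to be the only place where anything has to be verified; the contradiction itself is then a single application of modus ponens.
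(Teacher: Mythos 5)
Your proposal is correct and matches the paper's (implicit) argument: the corollary is an immediate consequence of the preceding Proposition by applying a hypothetical section of $\MP$ to $\f$ and the inhabitant from part (1), contradicting part (2). The extra care you take in checking that $\f$ is a well-defined section of $\Nat\rightarrow\Bool$ via the strict gluing condition is a worthwhile verification that the paper leaves to the reader when it defines $\f$, but it does not change the route.
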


\begin{corollary}
One cannot show Markov's principle in type theory with one univalent universe.
\end{corollary}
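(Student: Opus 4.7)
The plan is to invoke soundness of the stack model for type theory with one univalent universe, and then combine it with the previous corollary. More precisely, the development in Sections~\ref{sec:stack} and~\ref{sec:markov} establishes that stacks (with the strict gluing condition) over the free Boolean algebra generated by $p_0, p_1, \dots$ form a model of the rules in figures~\ref{fig:tt}, \ref{fig:tt-univ}, \ref{fig:tt-path} and~\ref{fig:tt-sigma-nat-bool}: dependent products and sums are handled by the theorems on closure of stack structures under $\Pi$ and $\Sigma$; path types are handled by the proposition that $\Path{A}{a_0}{a_1}$ inherits a (discrete) stack structure; the universe $\Univ$ of sheaves is a small discrete stack and is univalent by the correspondence between equivalences and paths in $\Univ$; and the strict gluing refinement gives the required interpretation of $\Nat$ and $\Bool$ with their recursors.

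By the usual soundness argument for generalized algebraic presentations, any closed term $\emp \vdash m : \MP$ derivable in the type theory must be interpreted by a global section of the interpretation of $\MP$ in this model. Unfolding the definitions, such a section would in particular produce, when applied to the closed term $\f : \Nat \rightarrow \Bool$ constructed in the previous subsection and to a witness of $\neg\neg(\Sigma(x:\Nat)\,\El\,\isZero\,(\f\,x))$, an inhabitant of $\Sigma(x:\Nat)\,\El\,\isZero\,(\f\,x)$ at the top element $1$ of the Boolean algebra. The preceding proposition provides the first of these inputs and rules out the output, contradicting the existence of $m$.

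The first step is therefore to record that the interpretation of the derivability judgement $\emp \vdash m : \MP$ yields, at the top element $1$, a function of the appropriate kind in the model. The second step is to specialize this function to $\f$ and to any global section of $\neg\neg(\Sigma(x:\Nat)\,\El\,\isZero\,(\f\,x))$ (which exists by part~(1) of the proposition), thus obtaining an element of $(\Sigma(x:\Nat)\,\El\,\isZero\,(\f\,x))(1)$. The third step is to observe that this contradicts part~(2) of the proposition. There is no genuine obstacle here since all the model-theoretic work has been done; the only care required is checking that the interpretation actually validates all the rules used in writing down $\MP$, which amounts to listing the previously proved closure theorems.
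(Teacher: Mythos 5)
Your proposal is correct and matches the paper's (implicit) argument: the corollary is obtained by soundness of the stack model over the free Boolean algebra together with the preceding proposition and corollary showing that $\MP$ fails in that model. The paper leaves this routine soundness step unstated, and your elaboration of it introduces no new ideas beyond what the paper intends.
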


 The situation however is different from the one of countable choice. The following
provides an alternative argument that Markov's principle cannot be proved
in type theory with one univalent universe\footnote{This argument gives also a proof that Markov's principle is independent of a hierarchy
of univalent universes by considering the cubical set model \cite{CCHM} in a set theory
where Markov's principle does not hold.}.

\begin{proposition}
Markov's principle does not hold in the groupoid model in a set theory where Markov's principle
does not hold (for instance in suitable sheaf models of CZF \cite{GAMBINO2006164}).
\end{proposition}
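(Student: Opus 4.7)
\emph{Proof plan.} The strategy is to argue by contraposition: any section $\emp \vdash c : \MP$ in the groupoid model directly yields a proof of Markov's principle in the meta-theory, so if meta-MP fails, no such section can exist.

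First I would unfold the interpretation of the relevant types at the empty context $\emp$. Since $\Nat$, $\Bool$, $\Top$ and $\Bot$ are discrete, an object of $\Nat \to \Bool$ at $\emp$ is a meta-function $h : \nats \to \{0,1\}$, and an object of $\Sigma (x:\Nat)\, \El\,\isZero(h\,x)$ at $\emp$ is a pair $(n,\star)$ with $h(n) = 0$. In particular, inhabitation of this $\Sigma$-type at $\emp$ corresponds exactly to the meta-statement $\exists n\,(h(n) = 0)$.

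Next I would compute $\neg$ and $\neg\neg$ in the model. Because $\Bot$ has no objects, an element of $\neg A = A \to \Bot$ at $\emp$ is a function into an empty groupoid, which exists iff $A$ has no object at $\emp$. Iterating, $\neg\neg A$ is inhabited at $\emp$ iff $\neg A$ has no object, iff $\neg\neg(A\ \text{is inhabited})$ holds in the meta-theory. Specialising to $A = \Sigma (x:\Nat)\, \El\,\isZero(h\,x)$ shows that $\neg\neg A$ is inhabited at $\emp$ precisely when $\neg\neg \exists n\,(h(n) = 0)$ holds in the meta.

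Combining these translations, a section $\emp \vdash c : \MP$ assigns, uniformly in $h$, a function from $\neg\neg A$ to $A$. Since $\neg\neg A$ has at most one object (it is a proposition), giving such a function amounts to: whenever $\neg\neg A$ is inhabited, produce an object of $A$. By the previous two steps, this is exactly: given $h : \nats \to \{0,1\}$ and a meta-proof of $\neg\neg\exists n\,(h(n) = 0)$, produce a specific $n$ with $h(n) = 0$, i.e.\ Markov's principle in the meta-theory. The key step, and the only nontrivial one, is the double-negation computation, which hinges on $\Bot$ being genuinely empty and the interpretation of $\Pi$ behaving as an honest function space once the codomain is an empty discrete groupoid.
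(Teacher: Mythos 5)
Your proof is correct and is essentially the argument the paper intends: the proposition is stated without an explicit proof, but the footnote attached to it makes clear that the intended reasoning is exactly your reduction, namely that the groupoid model interprets $\Nat$, $\Bool$, $\Bot$ and hence negation ``standardly'' at $\emp$, so that a section $\emp \vdash c : \MP$ would transport to a proof of Markov's principle in the (constructive) meta-theory, contradicting the hypothesis. Your double-negation computation is the right key step and is constructively valid, since a map into the empty groupoid exists exactly when the domain is empty.
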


In \cite{CM} it was shown that Markov's principle is independent from type theory with one (non-univalent) universe. The paper describes an extension of type theory where the principle does not hold and proves the consistency of that extension with a normalization argument. We note however that the model given here does not give an interpretation of the extended type theory in \cite{CM}. In particular the universe (inductively defined) in that extension satisfies the sheaf property.

\onlyinsubfile{
  \bibliography{bibliography}
}

\section{Conclusion}

 One special case of sheaf models are Boolean-valued models, for instance as in the work
\cite{MR3227408}, and it would be interesting to formulate a stack version of these models as well.

 We expect that essentially the same kind of models can be defined over a {\em site} and not only
over a topological space. In particular, it should be possible to extend the sheaf model in \cite{MC}
to a stack model of type theory with an algebraic closure of a given field, where existence
of roots is formulated using propositional truncation (as explained in the cited work, this existence
cannot be stated using strong existence expressed by sigma types). Another example could be a stack version
of Schanuel topos used in the theory of nominal sets \cite{Pitts}.

 As  stated in the introduction, the argument should generalize to an $\infty$-stack version
of the cubical set model \cite{CCHM}. The coherence condition on descent data will be
infinitary in general, but it will become finitary when we restrict the homotopy level (and empty in particular
in the case of propositions). 

\bibliography{bibliography}

\end{document}